\def\draft{0}
\newcommand{\gnote}[1]{\ifnum\draft=1 {\color{red} [\textbf{P:} #1]}\fi}
\newcommand{\vnote}[1]{\ifnum\draft=1 {\color{magenta} [\textbf{S:} #1]}\fi}
\newcommand{\knote}[1]{\ifnum\draft=1 {\color{orange} [\textbf{K:} #1]}\fi}
\setlist[itemize]{topsep=3pt}
\setlist[enumerate]{topsep=3pt}
\numberwithin{equation}{section}
\declaretheoremstyle[bodyfont=\it,qed=\qedsymbol]{noproofstyle}
\declaretheorem[name=Observation,numbered=no]{observation*}
\declaretheorem[numberlike=equation]{theorem}
\declaretheorem[name=Theorem,numbered=no]{theorem*}
\declaretheorem[numberlike=equation]{lemma}
\declaretheorem[name=Lemma,numbered=no]{lemma*}
\declaretheorem[numberlike=equation]{corollary}
\declaretheorem[name=Corollary,numbered=no]{corollary*}
\declaretheorem[name=Proposition,numbered=no]{proposition*}
\declaretheorem[name=Claim,numbered=no]{claim*}
\declaretheorem[name=Conjecture,numbered=no]{conjecture*}
\declaretheorem[name=Question,numbered=no]{question*}
\declaretheoremstyle[bodyfont=\it]{defstyle} 
\declaretheorem[numberlike=equation,style=defstyle]{definition}
\declaretheorem[unnumbered,name=Definition,style=defstyle]{definition*}
\declaretheorem[unnumbered,name=Notation=defstyle]{notation*}
\declaretheorem[unnumbered,name=Construction,style=defstyle]{construction*}
\declaretheoremstyle[]{rmkstyle} 
\declaretheorem[numberlike=equation,style=rmkstyle]{example}
\declaretheorem[unnumbered,name=Example,style=rmkstyle]{example*}
\declaretheorem[unnumbered,name=Remark,style=rmkstyle]{remark*}
\renewcommand{\b}[1]{\boldsymbol{#1}}
\newcommand{\sfv}[0]{\mathsf{v}}
\newcommand{\sfV}[0]{\mathsf{V}}
\newcommand{\sfi}[0]{\mathtt{i}}
\newcommand{\sfj}[0]{\mathtt{j}}
\newcommand{\sfk}[0]{\mathtt{k}}
\newcommand{\sfx}[0]{\mathsf{x}}
\titlespacing{\section}{0pt}{1.5ex}{0ex}
\titlespacing{\subsection}{0pt}{1.5ex}{0ex}
\titlespacing{\subsubsection}{0pt}{1ex}{0ex}
\titlespacing{\paragraph}{0pt}{1.5ex}{1ex}
\title{Fair allocation of a multiset of indivisible items}
\author{Pranay Gorantla\thanks{Physics Department, Princeton University, Princeton, NJ, USA.
Email: \texttt{gorantla@princeton.edu}}
\and Kunal Marwaha\thanks{Department of Computer Science, University of Chicago, Chicago, IL, USA.
Supported by the National Science Foundation Graduate Research Fellowship Program under Grant No.
DGE-1746045.
Email: \texttt{kmarw@uchicago.edu}}
\and Santhoshini Velusamy\thanks{School of Engineering and Applied Sciences, Harvard University, Cambridge, MA, USA.
Supported in part by a Google Ph.D. Fellowship, a Simons 
Investigator Award to Madhu Sudan, and NSF Awards CCF 1715187 and CCF 2152413.
Email: \texttt{svelusamy@g.harvard.edu}}
}
\begin{document}
\date{}
\maketitle

\begin{abstract}

We study the problem of \emph{fairly} allocating a \emph{multiset} $M$ of $m$ \emph{indivisible} items among $n$ agents with \emph{additive} valuations. 
Specifically, we introduce a parameter $t$ for the number of distinct \emph{types} of items and study fair allocations of multisets that contain only items of these $t$ types, under two standard notions of fairness:
\begin{enumerate}
    \item Envy-freeness (EF): For arbitrary $n$, $t$, we show that a complete EF allocation exists when at least one agent has a \emph{unique} valuation and the number of items of each type exceeds a particular finite \emph{threshold}. We give explicit upper and lower bounds on this threshold in some special cases.
    \item Envy-freeness up to any good (EFX): For arbitrary $n$, $m$, and for $t\le 2$, we show that a complete EFX allocation always exists. We give two different proofs of this result.
    One proof is constructive and runs in polynomial time; the other is geometrically inspired.
\end{enumerate}

\end{abstract}

\clearpage
\pagebreak
\tableofcontents
\pagebreak

\section{Introduction}\label{sec:intro}
Fair allocation of \emph{indivisible} items is a well-studied, fundamental problem in economics.
Given a set $M$ of $m$ items, and $n$ agents with individual valuations, the goal is to \emph{completely} allocate all the items among the agents in a \emph{fair} manner.
Several notions of fairness have been considered in the literature.
One of the most well-studied notions is \emph{envy-freeness (EF)}.
An allocation is considered to be an EF allocation if no agent \emph{envies} another agent, i.e., prefers the set of items received by another agent to their own set.
Although such an allocation might be ideally ``fair'', complete EF allocations do not always exist.
For example, say we allocate one item among two agents; certainly the agent who does not get the item is envious.

There are several ways to relax the definition of envy-freeness.
One such definition is 
\emph{envy-freeness up to one item (EF1)}, introduced by Budish in \cite{ef1_defn}.
In an EF1 allocation, agents are allowed to be envious, but the envy of any agent disappears when \emph{some} item is removed from the set of the envied agent.
A complete EF1 allocation always exists and can be obtained in polynomial time using the ``envy-cycles'' procedure by Lipton, Markakis, Mossel, and Saberi \cite{envy_cycles}.
In fact, in the special case when all agents have \emph{additive} valuations, there is a simpler procedure: order the agents arbitrarily, and keep allocating their most preferred among the remaining items one-by-one in a round-robin way.

Unfortunately, there are many settings where an EF1 allocation is not intuitively fair.
For example, say we have two agents whose valuations are additive, and three items $g_1$, $g_2$, and $g_3$ that are valued as 10, 4, and 5 respectively by both the agents.
While the allocation where one agent gets $\{g_1,g_2\}$ and the other agent gets $\{g_3\}$ is an EF1 allocation, an intuitively fairer allocation is one where one agent gets $\{g_1\}$ and the other agent gets $\{g_2,g_3\}$.
This disparity inspired the definition of a stronger notion of fairness, \emph{envy-freeness up to any item (EFX)}, by Caragiannis, Kurokawa, Moulin, Procaccia, Shah, and Wang in \cite{efx_nashwelfare}.
An allocation is said to be an EFX allocation if no agent envies any strict subset of items received by any other agent.
Note that by definition, an EFX allocation is also an EF1 allocation.
In the example stated above, the only complete EFX allocations are $\langle\{g_1\},\{g_2,g_3\} \rangle$ and $\langle\{g_2,g_3\},\{g_1\} \rangle$.
Despite significant efforts, complete EFX allocations have been shown to exist only when valuations are identical \cite{efx_leximin_general,efx_charity}, agents have identical additive \emph{preferences} \cite{efx_leximin_general}, $n=2$ \cite{efx_leximin_general}, or $n=3$ with additive valuations \cite{efx_3agents}.
The latter result was recently extended to settings where two out of three valuations are general \emph{monotone} functions \cite{EFX_latest}.
It is still open whether EFX exists for all sets of items and valuations.

In this paper, we reformulate the fair allocation problem to allow $M$ to be a multiset.
Specifically, we introduce a parameter $t$ for the number of distinct \emph{types} of items. There are several natural settings where $t$ is small while $m$ and $n$ are arbitrarily large.
For example, consider a scenario where a local farm has excess produce and wants to distribute them fairly among their employees.
The number of distinct types of produce is usually small, while the excess produce and the number of employees can be quite large.
This is in contrast to the setup in many previous works which restrict $n$ to be small \cite{efx_leximin_general,efx_3agents,efx_4agents}.

In what follows, we precisely define the fair allocation problem and the notions of fairness that we consider, and then state our results.

\subsection{Setup}\label{sec:setup}

Let $t$ denote the number of distinct types of items, and let $g_a$ denote the item of type $a\in[t]:=\{1,\ldots,t\}$.
Let $\mathscr M$ be the collection of all multisets of items of these $t$ types.
Consider any multiset $X:=\{g_a^{x_a}:a\in[t]\}\in\mathscr M$, where $x_a\ge0$ denotes the multiplicity of $g_a$ (i.e., number of items of type $a$) in $X$.
The multiset $X$ is associated with a nonnegative integer point $\b x := (x_1,\ldots,x_t) = \sum_{a\in[t]}x_a \b e_a\in \mathbb Z_{\ge0}^t$, where $\b e_1,\ldots,\b e_t$ are the standard unit vectors of $\mathbb R^t$.
This gives a one-to-one correspondence between $\mathscr M$ and $\mathbb Z_{\ge 0}^t$.

Standard operations on multisets $X=\{g_a^{x_a}:a\in[t]\}$ and $Y=\{g_a^{y_a}:a\in[t]\}$ include:
\begin{itemize}
	\item \emph{Containment}: An item $g$ of type $a\in[t]$ is contained in $X$, i.e., $g\in X$, if $x_a>0$.
	\item \emph{Inclusion}: $X$ is a subset of $Y$, i.e., $X\subseteq Y$, if $x_a\le y_a$ for all $a\in[t]$. We also denote this as $\b x\le \b y$. The inclusion is strict, i.e., $X\subsetneq Y$, if in addition $x_a < y_a$ for some $a\in[t]$. We also denote this as $\b x<\b y$.
	\item \emph{Union}: The union of $X$ and $Y$ is $X\cup Y:=\{ g_a^{\max\{x_a,y_a\}}:a\in[t]\}$.
	\item \emph{Intersection}: The intersection of $X$ and $Y$ is $X\cap Y:=\{ g_a^{\min\{x_a,y_a\}}:a\in[t]\}$.
	\item \emph{Sum}: The sum of $X$ and $Y$ is $X\uplus Y:=\{ g_a^{x_a+y_a}:a\in[t]\}$.
	\item \emph{Difference}: The difference of $X$ and $Y$ is $X\setminus Y:=\{ g_a^{\max\{x_a-y_a,0\}}:a\in[t]\}$.
\end{itemize}
In this paper, we need only the first two and the last two operations.
Intuitively, for any multiset $X$ and any item $g$, the sum $X\uplus\{g\}$ is the same as \emph{adding} the item $g$ to $X$ and the difference $X\setminus\{g\}$ is the same as \emph{removing} the item $g$ from $X$ if $g\in X$.

There are $n$ agents.
Each agent $i\in [n]$ has a valuation $V_i:\mathbb Z_{\ge0}^t \rightarrow [0,\infty)$ that satisfies \emph{monotonicity}, i.e., $V_i(\b x) \le V_i(\b y)$ if $\b x \le \b y$.
We use $V_i(\b x)$ and  $V_i(X)$ interchangeably because of the correspondence between $\mathbb Z_{\ge 0}^t$ and $\mathscr M$.
Then, monotonicity can be stated as $V_i(X) \le V_i(Y)$ if $X\subseteq Y$.
The valuation $V_i$ is said to be \emph{additive} if for every $\b x\in\mathbb Z_{\ge 0}^t$, $V_i(\b x):=\sum_{a\in[t]} v_i(a) x_a$, where the function $v_i:[t]\rightarrow [0,\infty)$ is such that $v_i(a)$ is agent $i$'s value for item $g_a$.\footnote{We assume that every agent has a positive value for at least one type of item, i.e., $v_i$ is not the zero function; otherwise, even if no items are allocated to them, they do not envy anyone.}
In particular, an additive valuation is described by only $t$ parameters.
We refer to $v_i$ as the \emph{item-value function} associated with the valuation $V_i$.

Consider a multiset $M:=\{g_a^{m_a}:a\in[t]\}\in\mathscr M$ associated with $\b m := (m_1,\ldots,m_t)\in\mathbb Z_{\ge0}^t$.
We want to allocate these $m:=|M|=\sum_{a\in[t]}m_a$ items to the $n$ agents.
An allocation $\mathcal X:=\langle X_1,X_2,\ldots,X_n\rangle$ is an ordered partition of a multiset $M'\subseteq M$ into $n$ multisets, i.e., $\biguplus_{i\in[n]}X_i = M'$, such that $X_i$ is allocated to agent $i$.
An allocation is said to be \emph{partial} if $M'\subsetneq M$, and \emph{complete} if $M'=M$.
We say that an allocation $\mathcal X$ is
\begin{itemize}
    \item an EF allocation if for every $i\neq j$, $V_i(X_i)\ge V_i(X_j)$,
    \item an EF1 allocation if for every $i\neq j$, there is a $g\in X_j$ such that $V_i(X_i)\ge V_i(X_j\setminus\{g\})$, and
    \item an EFX allocation if for every $i\neq j$, and any $g\in X_j$, $V_i(X_i)\ge V_i(X_j\setminus\{g\})$.
\end{itemize}
Observe that $\mathcal X$ is an EFX allocation if and only if for every $i\neq j$, and any $S\subsetneq X_j$, $V_i(X_i)\ge V_i(S)$.
It is also worth noting that when agents have additive valuations, $\mathcal X$ is an EFX allocation if and only if for every $i\ne j$, $V_i(X_i)\ge V_i(X_j\setminus\{g\})$, where $g$ is the least preferred item of $i$ in $X_j$.

We note that when $t=m$, $M$ is a set, and we recover the original fair allocation problem.
\subsection{Our results}

\subsubsection{Result 1: EF with enough items}\label{sec:result1}

We first prove that if at least one agent has a \emph{unique} additive valuation, then a complete EF allocation always exists as long as there are \emph{enough} items of each type.
While complete EF allocations are known to exist with high probability in certain \emph{randomized} settings \cite{EFX_random_14,EFX_random_20,EFX_random_21,EF_smoothed_utilities}, ours is the first result that shows existence of complete EF allocations in a deterministic setting.

Let us define what we mean by an agent having a unique additive valuation.
\begin{definition}\label{def:identical-distinct}
The additive valuations $V_i$ and $V_j$ are said to be \emph{identical} if there exists a $\lambda >0$ such that $v_i(a) = \lambda v_j(a)$, for all $a\in [t]$.
They are said to be \emph{distinct} otherwise.
\end{definition}

An agent is said to have a unique valuation if their valuation is distinct from every other agent's valuation.
We provide a formal statement of our result below:
\begin{theorem}\label{thm:intro_EF}
If the valuations $V_1,\dots,V_n$ of the agents are additive and at least one of them is unique, then there exists a $\mu\in \mathbb{N}$ such that whenever there are at least $\mu$ items of each type in $M$, i.e., $\b m \ge \mu \b 1^t$, there exists a complete EF allocation of $M$.\footnote{We use the notation $\b 0^q := (0,\ldots,0)\in \mathbb R^q$, and $\b 1^q := (1,\ldots,1)\in \mathbb R^q$ for any positive integer $q$.}
\end{theorem}

In fact, we prove the following stronger theorem in \cref{sec:proof_EF} that implies \cref{thm:intro_EF}.
Let $1\le d \le n$ and $\sfV_1,\dots,\sfV_d$ be pairwise distinct additive valuations.
Let $n_1,\dots,n_d$ be positive integers with $\sum_{\sfi=1}^d n_\sfi = n$.
Say there are $n_\sfi$ agents with valuations identical to $\sfV_\sfi$ for all $\sfi\in [d]$. Let $r := \gcd(n_1,\dots,n_d)$.

\begin{theorem}\label{thm:intro_EF_0modr}
There exists a $\mu\in \mathbb{N}$ such that whenever $\b m \ge \mu \b 1^t$ and $\b m \equiv \b 0^t \pmod r$, there exists a complete EF allocation of $M$.
\end{theorem}

The two corollaries below directly follow from the above theorem, and the first corollary implies \cref{thm:intro_EF}.
\begin{corollary}[EF with enough items]\label{cor:EF_enoughitems}
If $r = 1$, then there exists a $\mu\in \mathbb{N}$ such that whenever $\b m \ge \mu \b 1^t$, there exists a complete EF allocation of $M$.
\end{corollary}

\begin{corollary}[EF with \emph{charity}]\label{cor:EFcharity}
There exists a $\mu\in \mathbb{N}$ such that whenever $\b m \ge \mu \b 1^t$, there exists an EF allocation of $M$ with at most $(r-1)$ unallocated items of each type.
\end{corollary}

\cref{cor:EFcharity} is similar in spirit to the results in \cite{efx_charity,efx_additive_to_general}: for agents with any general monotone valuations, there always exists an EFX allocation of $M$ with at most $n-1$ (improved to $n-2$ in \cite{efx_additive_to_general}) unallocated items.

In \cref{sec:bound_mu}, we prove upper and lower bounds on the value of $\mu$ in \cref{thm:intro_EF_0modr} in some special cases:
\begin{itemize}
    \item For $d= 2$, we show that $\mu = O(n^2\sqrt{t}/(r\delta))$, where $\delta$ is a measure of the ``distinctness'' of the valuations $\sfV_1,\dots,\sfV_d$\footnote{See \cref{sec:upperbound_mu} for the formal definition of $\delta$.}; we also show that $\mu= \Omega(n^2/(r\delta))$.
    \item For $d>2$ and $t=2$, we show that $\mu = O(n^2/(r\delta))$ for $t=2$.
\end{itemize}
Observe that for $d=1$, we have $r=n$. In this case, \cref{thm:intro_EF_0modr} is trivial as a complete EF allocation always exists when $\b m \equiv \b 0^t \pmod n$.

We prove a partial converse of \cref{thm:intro_EF_0modr} in \cref{sec:result1_converse}. In particular, we show that there \emph{exist} valuations for which whenever $\b m \not\equiv \b 0^t \pmod r$, there is no complete EF allocation.
We also prove a partial converse of \cref{cor:EF_enoughitems}. Namely, we show that there exist valuations for which if $r\ne 1$, then complete EF allocations need not exist even where there is an arbitrarily large number of items of each type. We also prove a full converse of \cref{cor:EF_enoughitems} in the special case of $d=1$ in \cref{sec:fullConverse}. Namely, when $n\ge 2$ and all agents have identical valuations, we show that complete EF allocations need not exist even when there is an arbitrarily large number of items of each type.

We now give an overview of the techniques involved in the proof of \cref{thm:intro_EF_0modr}. We first formulate the EF allocation problem as an \emph{integer linear programming} problem in $nt$ variables.
Let $V_1,\dots,V_n$ be the valuations of the agents and let $v_1,\dots,v_n$ be the corresponding item-value functions.
Let $x_{i,a}$ denote the number of items of type $a$ allocated to agent $i$.
It is a complete EF allocation if and only if
\begin{align*}
    &\sum_{a\in[t]} v_i(a) x_{i,a} \ge \sum_{a\in[t]} v_i(a) x_{j,a}~,\quad \forall i,j\in[n]~,\quad i\ne j~,
    \\
    &x_{i,a}\ge 0~,\quad\forall i\in[n]~,~a\in[t]~,
    \\
    &\sum_{i\in[n]} x_{i,a} = m_a~,\quad \forall a\in[t]~,
    \\
    &x_{i,a} \in\mathbb Z~,\quad \forall i\in[n]~,~a\in[t]~.
\end{align*}
Let $\b v:=(v_1,\ldots,v_n)$, and let $\Delta(\b v,\b m)\subseteq \mathbb R^{nt}$ denote the feasible set of the LP relaxation of this problem with the integrality constraints removed.
Our goal then is to show that $\Delta(\b v,\b m)$ contains an integer point under the assumptions of \cref{thm:intro_EF_0modr}.

However, even if we could show that $\Delta(\b v,\b m)$ contains an integer point for some $\b m \equiv \b 0^t \pmod r$, it is not immediately clear how to show that $\Delta(\b v,\b m')$ contains an integer point for all $\b m' \ge \b m$ with $\b m' \equiv \b 0^t \pmod r$.
This is because $\Delta(\b v, \b m)$ is not contained in $\Delta(\b v, \b m')$ in general.

To overcome this problem, we reformulate the LP problem in terms of a new set of variables, $\xi_{\sfi,a}$ for $\sfi\in[d-1]$ and $a\in[t]$.
Let $\b\sfv:= (\sfv_1,\dots,\sfv_d)$ be the item-value functions associated with $\sfV_1\dots,\sfV_d$. 
We denote the feasible set of this new LP problem by $\Gamma(\b \sfv,\b m)\subseteq \mathbb R^{(d-1)t}$.
We show that it has the following nice properties:
\begin{itemize}
    \item If $\b m\equiv \b0^t\pmod r$ and $\Gamma(\b \sfv,\b m)$ contains an ``integer cube'' (a cube with integer points as corners) of length $\frac{n}{r}-1$,\footnote{Since $r=\gcd(n_1,\dots,n_d)$, it follows that $r$ divides $n=n_1+\cdots+n_d$. Hence, $\frac{n}{r}$ is an integer.} with edges parallel to the coordinate axes, then one of the integer points in the cube maps to an integer point in $\Delta(\b v,\b m)$.
    \item $\Gamma(\b \sfv,\b m)$ satisfies the ``containment'' property: if $\b m' \ge \b m$, then $\Gamma(\b \sfv, \b m')$ contains $\Gamma(\b \sfv,\b m)$.
\end{itemize}
Combining these properties, it suffices to show that $\Gamma(\b \sfv,\mu \b 1^t)$ contains such an integer cube for sufficiently large $\mu$.
The bulk of \cref{sec:proof_EF} is devoted to showing precisely this.

\subsubsection{Result 2: EFX with at most two types of items}\label{sec:result3}

Our second result is the following:
\begin{theorem}\label{thm:intro_t2}
Complete EFX allocations exist when agents have additive valuations and the number of distinct types of items is at most two, i.e., $t\le 2$.
\end{theorem}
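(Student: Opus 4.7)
My plan is to prove \cref{thm:intro_t2} constructively by extending the envy-cycle-based algorithm from the proof of \cref{thm:intro_identical} (identical preferences) to $t=2$ with possibly non-identical preferences. Following Lipton--Markakis--Mossel--Saberi, the envy graph of any EFX partial allocation can be de-cycled while preserving EFX, so a source $s$ (an agent no one envies) always exists. The algorithm allocates items one by one to such a source; the new feature for $t=2$ is that we also have to choose \emph{which} type of item to allocate at each step.

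The key lemma I would target is: at every step of the algorithm, there exist a source $s$ and an item type $a\in\{1,2\}$ (with an unallocated item of type $a$) such that adding a type-$a$ item to $X_s$ preserves EFX. I would prove it by a case split on the composition of $X_s$. If $X_s$ is empty, any type works. If $X_s$ contains only items of one type $a$, adding another type-$a$ item preserves EFX trivially: for every other agent $i$, the least-valued item in $X_s\uplus\{g_a\}$ still has value $v_i(a)$, so the EFX condition for pair $(i,s)$ reduces to $V_i(X_i)\ge V_i(X_s)$, which holds by the source property.

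The main obstacle is the case where $X_s$ already contains both types. The source property gives each agent $i$ a slack of at least $\min(v_i(1),v_i(2))$ in the EFX inequality with respect to $s$ -- enough to absorb an added item of $i$'s less-preferred type -- but different agents may have different less-preferred types, so no single choice of $a$ may work simultaneously for every agent. To resolve this, I expect to maintain a stronger invariant throughout the algorithm: for instance, by choosing the source \emph{and} item type jointly (rather than picking any source greedily) so that $X_s$ remains a single-type bundle at the moment we extend it, or by exploiting the extra slack available when some agent strictly (not just weakly) does not envy $s$.

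A backup plan is a two-phase approach exploiting the fact that only two preference classes can arise for $t=2$: partition agents into $A_1=\{i:v_i(1)\ge v_i(2)\}$ and $A_2=\{i:v_i(2)>v_i(1)\}$, run the identical-preferences algorithm of \cref{thm:intro_identical} within each class on an appropriate sub-multiset of items, and then reconcile cross-class envy via a carefully chosen split of $m_1$ and $m_2$ between the two classes. The hard step here is choosing the initial split so that cross-class EFX holds automatically after the intra-class allocations; both plans ultimately must handle the interaction between agents with opposite preference orders, which is the phenomenon not covered by \cref{thm:intro_identical}.
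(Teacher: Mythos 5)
Your proposal correctly locates the difficulty, but it stops exactly there: the ``key lemma'' you target is false for general EFX allocations with acyclic envy graphs, and the ``stronger invariant'' that would rescue it --- which you leave unspecified --- is the entire content of the proof. Concretely, take $n=2$ with $v_1(1)=10$, $v_1(2)=1$, $v_2(1)=1$, $v_2(2)=10$, bundles $\b x_1=(1,0)$ and $\b x_2=(2,0)$, and only type-$2$ items left unallocated. This allocation is EFX, and its envy graph is the acyclic graph with the single edge $1\to 2$, so agent $1$ is the unique source; but giving a type-$2$ item to agent $1$ produces the bundle $(1,1)$, whose strict subset $(0,1)$ has value $10>2=V_2(X_2)$ to agent $2$, so EFX is destroyed. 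Hence no choice of source and available type works. Note also that the hard case is not only ``$X_s$ contains both types'': here $X_s$ is single-type, and the failure comes from an agent who strictly prefers the one type still available. Your algorithm must be engineered never to reach such a state, and neither of your sketches supplies that engineering --- the first defers it to an unnamed invariant, and the second (split $m_1,m_2$ between the two preference classes and run \cref{alg:identicalprefs} within each) leaves the choice of split, which is where all the work lies, as an acknowledged open step.

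For comparison, the paper's constructive proof in \cref{sec:t=2} is indeed a source-feeding algorithm, but its correctness rests on two global ingredients your plan lacks: (i) when type-$a$ items run short, the leftover ones go to the $u_a$ agents maximizing the ratio $v_i(a)/v_i(b)$ (the set $N_a^+$), and (ii) \cref{lem:noenvy}, which uses precisely that ratio ordering to show that the moment some agent of $N_a^+$ becomes envious, nobody outside $N_a^+$ envies anyone inside $N_a^+$, so the agents of $N_a^+$ become the sources for the final round-robin of type-$b$ items. The paper's alternative proof in \cref{sec:geometric} instead accepts that the item sometimes cannot be handed to anyone directly and performs a Pareto-improving bundle swap to a reachable ``most envious agent.'' Some such additional mechanism --- a concretely stated and provably maintained invariant, or a swap operation beyond de-cycling --- must be added before your outline constitutes a proof.
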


When $t=1$, there is a simple way to allocate the $m=m_1$ items.
First allocate $\lfloor m/n \rfloor$ items to every agent.
Then allocate the remaining $u := m - n\lfloor m/n \rfloor <n$ unallocated items equally among an arbitrary choice of $u$ agents.
Although these $u$ agents are envied, removing any item from them removes the envy.
Hence, this is a complete EFX allocation.

For $t=2$, our algorithm is more involved (see \cref{alg:tequals2}).
In each step, we allocate items one-by-one such that the allocation at the end of that step is EFX.
We divide the agents into two sets: $N_1$ is the set of agents who prefer a type-$1$ item to a type-$2$ item, and $N_2:=[n]\setminus N_1$ is the set of remaining agents. We briefly outline the four steps of the algorithm:
\begin{enumerate}
	\item In the first step, we give all agents their most preferred items in a round-robin way.
	Each round in this step consists of allocating one item to every agent.
	The step ends when there are not enough unallocated items to complete a round.
	Observe that the allocation at the end of this step is EF because all agents have equal number of items, and every item they have is their most preferred.
	
	\item Say fewer than $|N_1|$ type-$1$ unallocated items are leftover after the previous step.\footnote{If not, then there must be fewer than $|N_2|$ type-$2$ unallocated items. In this case, the rest of the algorithm is the same with the types $1$ and $2$ exchanged.}
	In the second step, we give these items, one each, to the agents who most value a type-$1$ item over a type-$2$ item.
	We denote the set of these agents as $N_1^+$, and define $N_1^-:=N_1\setminus N_1^+$.
	Observe that removing any item from an agent in $N_1^+$ removes any envy towards them, and hence the resulting allocation is EFX.
	
	\item There are only type-$2$ unallocated items leftover after the previous step. In the third step, we give them one-by-one to all the agents in $[n]\setminus N_1^+$ in a round-robin way.
	At the end of every round, we check if some agent in $N_1^+$ envies some agent. If so, the step ends, else, we proceed to the next round.
	Roughly, the idea is that agents in $[n]\setminus N_1^+$ are better off at the end of this step compared to the previous step, and hence their envy towards agents in $N_1^+$ only decreases.
	(In fact, if this step ends with type-$2$ items remaining and an agent in $N_1^+$ envying some agent, we show that no agent in $[n]\setminus N_1^+$ envies any agent.
	The proof of this crucially uses the fact that agents in $N_1^+$ are those who most value a type-$1$ item over a type-$2$ item.)
	On the other hand, since this step ends as soon as an agent in $N_1^+$ envies some other agent, and their least preferred (type-$2$) item was the last item to be allocated, removal of this item removes the envy, thus resulting in an EFX allocation.
	The step can also end if we run out of unallocated items.
	Even in this case, we show that the resulting allocation is EFX.
	
	\item In the fourth step, we simply allocate the leftover type-$2$ items in a round-robin way to all agents, starting with those in $N_1^+$.
	To show that the resulting allocation is EFX, we crucially rely on the fact (mentioned above) that agents in $[n]\setminus N_1^+$ do not feel envy initially, and agents in $N_1^+$ lose any envy as soon as they receive a type-$2$ item.
\end{enumerate}

We give formal pseudocode for the algorithm and a proof of correctness in \cref{sec:t=2}.

In \cref{sec:geometric}, we give an alternative geometrical proof of \cref{thm:intro_t2} and discuss roadblocks to generalizing this proof for $t\ge 3$. Our geometrical proof for $t=2$ uses well-known ideas introduced in \cite{efx_charity}---such as \emph{envy graph}, \emph{Pareto dominance}, \emph{most envious agent}, and \emph{reachability}.
In particular, we show that there is a source in the envy graph with a most envious agent that is reachable from the source.
To show this, we introduce new geometrical notions where a hyperplane in $\mathbb R^t$ represents the valuation of an agent, and a point on this hyperplane represents the set of items allocated to this agent.

\subsection{Future questions}

There are several directions in which our results can be extended. We list a few of them here.
\begin{enumerate}
    \item Can we obtain explicit upper and lower bounds on the threshold $\mu$ in \cref{thm:intro_EF} when both $d$ and $t$ are arbitrary?
    
    \item Can we obtain a complete characterization of valuations $V_1,\dots,V_n$ for which a complete EF allocation exists when there are enough items of each type?
    
    \item Do complete EFX allocations exist for $t\ge3$ when agents have additive valuations?
\end{enumerate}

\subsection{Related work}
Fair division has received a lot of attention since the introduction of the cake cutting problem by Steinhaus in \cite{cake_cutting}.
While there are finite bounded cake cutting protocols that guarantee \emph{envy-freeness} for any number of agents \cite{ef_protocol95,ef_protocol97,ef_protocol00,ef_protocol16,ef_anyagents}, envy-free (EF) allocations do not always exist when items are indivisible.
In fact, the problem of deciding whether or not a complete EF allocation exists is known to be NP-complete \cite{envy_cycles}.
However, complete EF allocations are known to exist with high probability in randomized settings, where the agents have additive valuations drawn from independent probability distributions and the number of items is sufficiently large relative to the number of agents \cite{EFX_random_14,EFX_random_20,EFX_random_21}.
A more recent work \cite{EF_smoothed_utilities} considers a smoothed model where each agent's item-values are independently and randomly perturbed.
They show that with sufficiently many items, a complete EF allocation exists with high probability if a large enough fraction of item-values are perturbed.

Many recent works study the allocation of items under relaxed notions of envy-freeness (including EF1 and EFX) in a number of settings \cite{EFX_cite_0,efx_nashwelfare,EFX_cite_2,EFX_cite_3,EFX_cite_4,EFX_cite_00,EFX_cite_5,efx_leximin_general,efx_charity,efx_3agents,EFX_latest,efx_4agents,efx_two_valuations, efx_additive_to_general,EFX_cite_6,EFX_cite_7,EFX_cite_8,EFX_cite_9,EFX_cite_10,EFX_cite_11,efx_with_copies,epistemicEFX}.
The setting considered in \cite{efx_with_copies} is superficially related to our setting in that they too consider ``copies'' of items.
However, they restrict allocations to those where no agent gets more than one copy of each item.
This restriction is so strong that even for three agents with identical additive valuations, they show that a complete EFX allocation need not exist among this restricted set of allocations.\footnote{Recall that complete EFX allocations do exist for three agents with additive valuations \cite{efx_3agents} and $n$ agents with identical valuations \cite{efx_leximin_general}.}

\section{Existence of complete EF allocations when items are plentiful} \label{sec:EF}

In this section, we prove \cref{thm:intro_EF_0modr}.
Let $V_1,\dots,V_n$ be the valuations of the agents and let $v_1,\dots,v_n$ be the corresponding item-value functions.
Let $x_{i,a}$ denote the number of items of type $a$ allocated to agent $i$.
It is a complete EF allocation if and only if
\begin{align}
    &\sum_{a\in[t]} v_i(a) x_{i,a} \ge \sum_{a\in[t]} v_i(a) x_{j,a}~,\quad \forall i,j\in[n]~,\quad i\ne j~,\label{EF-def}
    \\
    &x_{i,a}\ge 0~,\quad\forall i\in[n]~,~a\in[t]~,\label{pos-def}
    \\
    &\sum_{i\in[n]} x_{i,a} = m_a~,\quad \forall a\in[t]~,\label{comp-def}
    \\
    &x_{i,a} \in\mathbb Z~,\quad \forall i\in[n]~,~a\in[t]~.\label{int-def}
\end{align}
We call these the \emph{EF} constraints, the \emph{positivity} constraints, the \emph{completeness} constraints, and the \emph{integrality} constraints respectively.
Let $\b v:=(v_1,\ldots,v_n)$, and let $\Delta(\b v,\b m)\subseteq \mathbb R^{nt}$ denote the feasible set of the LP relaxation of this problem with the integrality constraints removed.

Let $\sfV_1,\dots,\sfV_d$ be pairwise distinct additive valuations and let $\sfv_1,\dots,\sfv_d$ be the corresponding item-value functions. For each $\sfi\in[d]$, let $N_\sfi$ be the set of agents with valuations identical to $\sfV_\sfi$.
In total, there are $n=\sum_{\sfi=1}^d n_\sfi$ agents, where $n_\sfi:=|N_\sfi|\ge1$. Define $r:=\gcd(n_1,\ldots,n_d)$. The main result of this section is the following:
\begin{theorem}\label{thm:EF}
There is a $\mu\in\mathbb N$ such that as long as $\b m\ge \mu\b 1^t$ and $\b m \equiv \b 0^t \pmod r$, the feasible set $\Delta(\b v,\b m)$ contains an integer point.\footnote{Recall that we use the notation $\b 0^q:=(0,\ldots,0)\in \mathbb R^q$, and $\b 1^q:=(1,\ldots,1)\in \mathbb R^q$ for any positive integer $q$.}
\end{theorem}
This is a restatement of \cref{thm:intro_EF_0modr}.
We prove \cref{thm:EF} in the next subsection.

\subsection{Proof of \cref{thm:EF}}\label{sec:proof_EF}
Label the agents so that $N_1 = \{1,\ldots,n_1\}$, $N_2 = \{n_1+1,\ldots,n_1+n_2\}$, and so on. For any $\sfi\in[d]$, if $i,j\in N_\sfi$, then the EF inequalities \eqref{EF-def} between agents $i$ and $j$ become equations because they have identical valuations. One way to satisfy these equations is by choosing $x_{i,a}=x_{j,a}$ for all $a\in[t]$. 
In fact, it suffices to prove \cref{thm:EF} by restricting to such solutions.

For any $\sfi\in[d]$ and $a\in[t]$, let $\sfx_{\sfi,a}$ denote the number of type-$a$ items allocated to each agent in $N_\sfi$. In other words, for all $i\in N_\sfi$, we have $x_{i,a} = \sfx_{\sfi,a}$. The remaining constraints of \eqref{EF-def}, \eqref{pos-def} and \eqref{comp-def} can be written as
\begin{align}
    &\sum_{a\in[t]} \sfv_\sfi(a) \sfx_{\sfi,a} \ge \sum_{a\in[t]} \sfv_\sfi(a) \sfx_{\sfj,a}~,\quad \forall \sfi,\sfj\in[d]~,\quad \sfi\ne \sfj~,\label{EF-def'}
    \\
    &\sfx_{\sfi,a}\ge 0~,\quad\forall \sfi\in[d]~,~a\in[t]~,\label{pos-def'}
    \\
    &\sum_{\sfi\in[d]} n_\sfi \sfx_{\sfi,a} = m_a~,\quad \forall a\in[t]~,\label{comp-def'}
\end{align}
Let us define a new set of variables, $\xi_{\sfi,a} := \sfx_{\sfi,a}-\sfx_{\sfi+1,a}$ for $\sfi\in[d-1]$ and $a\in[t]$, so that we can write $\sfx_{\sfi,a}=\sfx_{d,a} + \sum_{\sfk=\sfi}^{d-1} \xi_{\sfk,a}$ for any $\sfi\in[d-1]$ and $a\in[t]$.
Note that this is a \emph{unimodular} transformation from $\sfx_{\sfi,a}$ to $\xi_{\sfi,a}$ and $\sfx_{d,a}$.\footnote{A unimodular transformation is an integer linear transformation with determinant $1$.
In particular, it maps $\mathbb{Z}^{dt}$ to itself bijectively.}
In terms of $\xi_{\sfi,a}$ and $\sfx_{d,a}$, the constraints of \eqref{EF-def'}, \eqref{pos-def'} and \eqref{comp-def'} can be written as
\begin{align}
    &\sum_{a\in[t]} \sfv_\sfi(a) \sum_{\sfk=\sfi}^{\sfj-1} \xi_{\sfk,a} \ge 0~,\quad\text{and}\quad \sum_{a\in[t]} \sfv_\sfj(a) \sum_{\sfk=\sfi}^{\sfj-1} \xi_{\sfk,a} \le 0~, \quad \forall 1\le \sfi<\sfj\le d~,\label{EF'}
    \\
    &\sfx_{d,a} + \sum_{\sfk=\sfi}^{d-1} \xi_{\sfk,a} \ge 0~,\quad \forall \sfi\in[d]~,~a\in[t]~,\label{pos'}
    \\
    &\sum_{\sfk=1}^{d-1} \left(\sum_{\sfi=1}^{\sfk} n_\sfi\right) \xi_{\sfk,a} + n \sfx_{d,a} = m_a~,\quad \forall a\in[t]~.\label{comp'}
\end{align}
We solve for $\sfx_{d,a}$ using the completeness constraints \eqref{comp'}:
\begin{equation*}
    \sfx_{d,a}(\b \xi) = \frac1n \left[m_a - \sum_{\sfk=1}^{d-1} \left(\sum_{\sfi=1}^{\sfk} n_\sfi\right) \xi_{\sfk,a} \right]~,
\end{equation*} where $\b \xi := (\xi_{\sfi,a})$ denotes the remaining $(d-1)t$ variables.
We are then left with only the EF and positivity constraints in $(d-1)t$ variables,
\begin{align}
    &\sum_{a\in[t]} \sfv_\sfi(a) \sum_{\sfk=\sfi}^{\sfj-1} \xi_{\sfk,a} \ge 0~,\quad\text{and}\quad \sum_{a\in[t]} \sfv_\sfj(a) \sum_{\sfk=\sfi}^{\sfj-1} \xi_{\sfk,a} \le 0~, \quad \forall 1\le \sfi<\sfj\le d~,\label{EF}
    \\
    &m_a - \sum_{\sfk=1}^{\sfi-1} \left(\sum_{\sfj=1}^{\sfk} n_\sfj\right) \xi_{\sfk,a} + \sum_{\sfk=\sfi}^{d-1} \left(n-\sum_{\sfj=1}^{\sfk} n_\sfj\right) \xi_{\sfk,a} \ge 0~,\quad \forall \sfi\in[d]~,~a\in[t]~.\label{pos}
\end{align}
Define $\b \sfv := (\sfv_1,\ldots,\sfv_d)$, and let $\Gamma(\b \sfv,\b m)\subseteq \mathbb R^{(d-1)t}$ denote the feasible set of \eqref{EF} and \eqref{pos}.
Note that not every integer point in $\Gamma(\b \sfv,\b m)$ maps to an integer point in $\Delta(\b v,\b m)$ because $\sfx_{d,a}(\b \xi)$ need not be an integer for every integer point $\b \xi\in \Gamma(\b \sfv, \b m)$.
Nonetheless, the following is true:

\begin{lemma}\label{lem:int-cube}
If $\b m\equiv \b0^t\pmod r$, and if there is an integer point $\b \xi^*\in \Gamma(\b \sfv,\b m)$ such that $\b \xi^* + \b c\in \Gamma(\b \sfv,\b m)$ for all $\b c\in [-\frac{14 n}{rd},\frac{14 n}{rd}]^{(d-1)t} \cap \mathbb Z^{(d-1)t}$,\footnote{Since $r=\gcd(n_1,\dots,n_d)$, we have $n_\sfi \ge r$ for all $\sfi\in[d]$, and so $n\ge rd$. Hence, $\frac{14 n}{rd} \ge 14>1$.} then there is a $\b c^*\in [-\frac{14 n}{rd},\frac{14 n}{rd}]^{(d-1)t} \cap \mathbb Z^{(d-1)t}$ such that $\sfx_{d,a}(\b \xi^* + \b c^*)$ is an integer for all $a\in[t]$.
\end{lemma}
The hypothesis of \cref{lem:int-cube} is equivalent to the statement that $\Gamma(\b \sfv,\b m)$ contains a $(d-1)t$-dimensional ``integer cube'' (cube with integer points as corners) of length $2\lfloor\frac{14 n}{rd}\rfloor$, with edges parallel to the coordinate axes, centered at an integer point $\b \xi^*\in\Gamma(\b \sfv,\b m)$. So \cref{lem:int-cube} ensures that one of the integer points in this cube maps to an integer point in $\Delta(\b v,\b m)$. In order to prove \cref{lem:int-cube}, we need the following mathematical fact proved in \cite{gcd_result}:
\begin{lemma}\label{lem:gcd_result}
For any integer $q$, there exist integers $Y_1,\ldots,Y_d$ with $|Y_\sfi| \le \frac{7 n}{rd}$ for all $\sfi\in[d]$ such that
\begin{equation*}
    \sum_{\sfi\in[d]} \left(\frac{n_\sfi}{r}\right) Y_\sfi \equiv q \mod \frac{n}{r}~.
\end{equation*}
\end{lemma}
We reproduce the proof of \cref{lem:gcd_result} in \cref{app:gcd_result} for completeness. Let us now prove \cref{lem:int-cube}:
\begin{proof}[Proof of \cref{lem:int-cube}]
For each $a\in[t]$, define
\begin{equation*}
    q_a := \frac{1}{r} \left[m_a - \sum_{\sfk=1}^{d-1} \left(\sum_{\sfi=1}^{\sfk} n_\sfi\right) \xi^*_{\sfk,a}\right]~.
\end{equation*}
Each $q_a$ is an integer because $m_a \equiv 0 \pmod r$ (by hypothesis)  and $n_\sfi \equiv 0 \pmod r$ for all $\sfi\in [d]$. By \cref{lem:gcd_result}, for each $a\in[t]$, there are integers $Y_{1,a},\ldots,Y_{d,a}$, with $|Y_{\sfi,a}| \le \frac{7 n}{rd}$ for all $\sfi\in[d]$, such that
\begin{equation*}
    \sum_{\sfi\in[d]} \left(\frac{n_\sfi}{r}\right) Y_{\sfi,a} \equiv q_a \mod \frac{n}{r}
\end{equation*}
For each $\sfi\in [d-1]$ and $a\in[t]$, choose $\b c^*$ such that 
\begin{equation*}
    c^*_{\sfi,a} = Y_{\sfi,a}-Y_{\sfi+1,a} \in \left[-\frac{14 n}{rd},\frac{14 n}{rd}\right]^{(d-1)t}\cap \mathbb Z^{(d-1)t}~,
\end{equation*}
It follows that
\begin{equation*}
    \frac{1}{r}\sum_{\sfk=1}^{d-1} \left(\sum_{\sfi=1}^{\sfk} n_\sfi\right) c^*_{\sfk,a} \equiv q_a \mod \frac{n}{r}~,
\end{equation*}
for each $a\in[t]$. By hypothesis, $\b \xi^* + \b c^*$ is also an integer point in $\Gamma(\b \sfv,\b m)$.
Moreover, for each $a\in[t]$,
\begin{align*}
    \sfx_{d,a}(\b \xi^* + \b c^*) &= \frac1n \left[m_a - \sum_{\sfk=1}^{d-1} \left(\sum_{\sfi=1}^{\sfk} n_\sfi\right) \xi^*_{\sfk,a} - \sum_{\sfk=1}^{d-1} \left(\sum_{\sfi=1}^{\sfk} n_\sfi\right) c^*_{\sfk,a} \right]
    \\
    &= \frac{r}n \left[q_a - \frac{1}{r}\sum_{\sfk=1}^{d-1} \left(\sum_{\sfi=1}^{\sfk} n_\sfi\right) c^*_{\sfk,a} \right]\in\mathbb Z~.
\end{align*}
So, $\b \xi = \b \xi^*+\b c^*$ together with $\sfx_{d,a}(\b \xi)$ maps to an integer point in $\Delta(\b v,\b m)$.
\end{proof}

Thus, we just have to show that $\Gamma(\b \sfv, \b m)$ contains a $(d-1)t$-dimensional ``integer cube'' of length $2\lfloor\frac{14 n}{rd}\rfloor$, with edges parallel to the coordinate axes.
For this, it suffices to show that $\Gamma(\b \sfv, \b m)$ contains a $(d-1)t$-dimensional cube of length $\frac{28 n}{rd}+1$, with edges parallel to the coordinate axes, because any (closed) interval of length $2\gamma+1$ on the real line, for some $\gamma\ge0$ contains $2\lfloor\gamma\rfloor+1$ consecutive integers.
For example, in $\mathbb R^2$, a square of length $2\gamma+1$ with edges parallel to the coordinate axes, say $[\alpha,\alpha+2\gamma+1]\times[\beta,\beta+2\gamma+1]$ for some $\alpha,\beta\in \mathbb R$, contains the ``integer square'' of length $2\lfloor\gamma\rfloor$ with corners $(\lceil \alpha\rceil,\lceil \beta\rceil)$, $(\lceil \alpha\rceil+2\lfloor\gamma\rfloor,\lceil \beta\rceil)$, $(\lceil \alpha\rceil,\lceil \beta\rceil+2\lfloor\gamma\rfloor)$, and $(\lceil \alpha\rceil+2\lfloor\gamma\rfloor,\lceil \beta\rceil+2\lfloor\gamma\rfloor)$.

In what follows, we show that $\Gamma(\b \sfv,\b m)$ contains a $(d-1)t$-dimensional cube of length $\frac{28 n}{rd}+1$, with edges parallel to the coordinate axes, when $m_a$'s are large enough.
Let us relax $m_a$ to be real.
Since $\Gamma(\b \sfv,\b m)$ is an intersection of \emph{half-spaces}, it is a convex region in $\mathbb R^{(d-1)t}$.
Moreover, the following is true:
\begin{lemma}\label{lem:volume}
If the valuations $\sfV_1,\ldots,\sfV_d$ are additive and pairwise distinct, then $\Gamma(\b \sfv,\b m)$ has nonzero $(d-1)t$-dimensional volume for any $\b m \in \mathbb R^t_+$.
\end{lemma}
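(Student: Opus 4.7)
The plan is to show that $\Gamma(\b v,\b m)$ is full-dimensional in $\mathbb R^{(n-1)t}$, which is equivalent to its having nonzero $(n-1)t$-dimensional volume; concretely, I will exhibit a $\b\xi^*$ satisfying every inequality in \eqref{EF} and \eqref{pos} strictly. Since $\b\xi = \b 0$ lies strictly inside the positivity region \eqref{pos} for $\b m \in \mathbb R^t_+$, and the EF constraints \eqref{EF} are homogeneous in $\b\xi$, any $\b\xi$ satisfying \eqref{EF} strictly can be rescaled by a small positive factor so that \eqref{pos} also holds strictly. The task therefore reduces to producing $y_1,\dots,y_n\in\mathbb R^t$ (interpreted as perturbations $y_i = x_i - \b m/n$ of the equal allocation) with $\sum_i y_i = \b 0$ and $v_i\cdot(y_i-y_j)>0$ for every $i\neq j$.

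I would then invoke a Farkas/Motzkin transposition alternative: such $y$ fails to exist iff there are nonnegative multipliers $\alpha_{i,j}\ge 0$ (for $i\neq j$, not all zero) and a Lagrange multiplier $\mu\in\mathbb R^t$ for the sum constraint such that $\sum_{i\neq j}\alpha_{i,j}\,v_i\cdot(y_i-y_j)+\mu\cdot\sum_i y_i$ vanishes identically in $y$. Matching the coefficient of $y_{i,a}$ yields, for each $i\in[n]$, the vector identity $\Lambda_i v_i-\sum_{j\neq i}\alpha_{j,i}v_j=-\mu$, where $\Lambda_i:=\sum_{j\neq i}\alpha_{i,j}$. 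Summing over $i$ forces $\mu=\b 0$, leaving the reduced system $\Lambda_i v_i=\sum_{j\neq i}\alpha_{j,i}v_j$ for every $i$. The remaining goal is to show that pairwise distinctness of the valuations forces $\alpha\equiv 0$.

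The main obstacle is this last step. Set $S:=\{i\in[n]:\Lambda_i>0\}$. If $i\notin S$, the outgoing multipliers $\alpha_{i,j}$ vanish by definition, and the reduced equation gives $\sum_{j\neq i}\alpha_{j,i}v_j=\b 0$; since each $v_j$ is a nonzero nonnegative vector, every incoming multiplier $\alpha_{j,i}$ must also vanish, so such an $i$ is totally isolated. Consequently, if $i\in S$ then the reduced equation expresses $v_i$ as a positive multiple of a nonnegative combination of $\{v_j:j\neq i,\,\alpha_{j,i}>0\}$, and every such $j$ satisfies $\Lambda_j\ge\alpha_{j,i}>0$, hence $j\in S$. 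Thus $v_i\in\mathrm{cone}(\{v_j:j\in S,\,j\neq i\})$ for every $i\in S$. Now, distinctness of the additive valuations $V_i,V_j$ is equivalent to $v_i$ and $v_j$ not being positive scalar multiples of each other, so no two $v_i$'s lie on a common ray. Assuming $S\neq\varnothing$, the cone $C:=\mathrm{cone}(\{v_i:i\in S\})\subseteq\mathbb R^t_{\ge 0}$ is nontrivial and pointed, hence has at least one extreme ray; that extreme ray is generated by some $v_{i^*}$ with $i^*\in S$, and since no other $v_j$ sits on it, we must have $v_{i^*}\notin\mathrm{cone}(\{v_j:j\in S,\,j\neq i^*\})$. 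This contradicts the previous conclusion, forcing $S=\varnothing$ and $\alpha\equiv 0$. The Motzkin dual is therefore infeasible, a strict $y$ exists, and scaling by a small $\epsilon>0$ and adding $\b m/n$ yields an interior point of $\Gamma(\b v,\b m)$.
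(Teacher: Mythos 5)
Your proof is correct, and it takes a genuinely different route from the paper's. The paper strengthens \eqref{EF} to the per-block constraints \eqref{EF-new}, orders the agents via a line transversal to the hyperplane arrangement $\{H_i\}$ passing through the positive orthant, and shows that each cone $\Sigma^{(j)}(\b v)$ is full-dimensional; the positivity constraints \eqref{pos} are then handled by intersecting with a small ball about the origin, which is the same idea as your rescaling step. You instead keep \eqref{EF} intact, reduce full-dimensionality to strict feasibility of $v_i\cdot(y_i-y_j)>0$ for all $i\ne j$, and refute the Motzkin/Gordan dual: summing the stationarity identities eliminates $\mu$, the support analysis confines everything to $S=\{i:\Lambda_i>0\}$, and the extreme-ray argument in the pointed cone $\mathrm{cone}(\{v_i:i\in S\})\subseteq\mathbb R^t_{\ge0}$ exhibits a generator that cannot lie in the cone of the others, using distinctness precisely in the form ``no two $v_i$ lie on a common ray'' (the same reading of \cref{def:identical-distinct} that the paper uses when arguing the hyperplanes $H_i$ are distinct). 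Each step checks out: $\Lambda_i=0$ does force total isolation of $i$ because the $v_j$ are nonzero and nonnegative; $\alpha_{j,i}>0$ does imply $j\in S$; and an extreme ray of a pointed finitely generated cone is spanned by a generator on which no other generator can sit. Your route is ordering-free, avoids the general-position transversal line, and makes transparent where each hypothesis enters (nonnegativity gives pointedness, hence extreme rays; distinctness gives the contradiction), whereas the paper's route is more geometric and additionally shows that the smaller region $\Gamma'(\b v,\b m)$ is already full-dimensional. One cosmetic remark: the interior point of $\Gamma(\b v,\b m)$ is just the rescaled $\b\xi$ built from the differences $y_k-y_{k+1}$; the translation by $\b m/n$ lives in the $x$-coordinates and drops out, but this does not affect correctness.
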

We prove this lemma soon but pursue its consequences now.
In particular, one consequence of \cref{lem:volume}, and convexity, is that $\Gamma(\b \sfv, \b 1^t)$ contains a $(d-1)t$-dimensional cube of nonzero length, say $\epsilon>0$ (which depends on $\b \sfv$), with edges parallel to the coordinate axes.
What can we say about $\Gamma(\b \sfv, \mu \b 1^t)$ for $\mu>0$?

\begin{lemma}\label{lem:scale}
For any $\lambda>0$, $\b \xi \in \Gamma(\b \sfv,\b m)$ if and only if $\lambda \b \xi \in \Gamma(\b \sfv,\lambda \b m)$.
In other words, $\Gamma(\b \sfv,\lambda \b m) = \lambda \Gamma(\b \sfv,\b m):=\{\lambda \b \xi:\b \xi \in \Gamma(\b \sfv,\b m)\}$.
\end{lemma}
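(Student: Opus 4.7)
The plan is to verify the lemma by directly inspecting how the two families of constraints defining $\Gamma(\b v,\b m)$ transform under the simultaneous scaling $\b \xi \mapsto \lambda \b \xi$ and $\b m \mapsto \lambda \b m$. The feasible set $\Gamma(\b v,\b m)$ is cut out by the EF constraints \eqref{EF} and the positivity constraints \eqref{pos}, so I would simply check each family separately.

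First, observe that the EF constraints \eqref{EF} are homogeneous of degree one in $\b \xi$ and do not involve $\b m$ at all. Hence $\b \xi$ satisfies \eqref{EF} if and only if $\lambda \b \xi$ does, for any $\lambda>0$. Next, the positivity constraints \eqref{pos} read
\begin{equation*}
m_a - \sum_{k=1}^{i-1} k\, \xi_{k,a} + \sum_{k=i}^{n-1} (n-k)\, \xi_{k,a} \;\ge\; 0,
\end{equation*}
which is linear jointly in $(\b \xi,\b m)$ and homogeneous of degree one. Replacing $\b \xi$ by $\lambda \b \xi$ and $\b m$ by $\lambda \b m$ multiplies the left-hand side by $\lambda>0$, preserving the inequality. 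Thus $\b \xi$ satisfies \eqref{pos} for $\b m$ if and only if $\lambda \b \xi$ satisfies \eqref{pos} for $\lambda \b m$.

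Combining these two observations yields the biconditional $\b \xi \in \Gamma(\b v,\b m) \iff \lambda \b \xi \in \Gamma(\b v,\lambda \b m)$, which is exactly the equality of sets $\Gamma(\b v,\lambda \b m) = \lambda\, \Gamma(\b v,\b m)$. There is no genuine obstacle here; the lemma is a routine homogeneity check that is recorded separately only because it will be used repeatedly, in conjunction with \cref{lem:volume}, to blow up the small cube guaranteed to sit inside $\Gamma(\b v, \b 1^t)$ into a cube of side $2$ inside $\Gamma(\b v, r\b 1^t)$ for sufficiently large $r$.
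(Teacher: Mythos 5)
Your proof is correct and follows exactly the same route as the paper's: observe that the EF constraints \eqref{EF} are homogeneous in $\b \xi$ and independent of $\b m$, while the positivity constraints \eqref{pos} are jointly homogeneous in $(\b \xi,\b m)$, so the biconditional follows immediately. Nothing is missing.
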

\begin{proof}
Note that \eqref{EF} is homogeneous in $\b \xi$, i.e., $\b \xi$ satisfies \eqref{EF} if and only if $\lambda \b \xi$ satisfies \eqref{EF}.
On the other hand, $\b \xi$ satisfies \eqref{pos} with parameter $\b m$ if and only if $\lambda \b \xi$ satisfies \eqref{pos} with parameter $\lambda \b m$.
\end{proof}
It follows from \cref{lem:scale} that $\Gamma(\b \sfv, \mu\b 1^t)$ contains a $(d-1)t$-dimensional cube of length $\mu\epsilon$ with edges parallel to the cordinate axes.
If we choose $\mu$ to be a positive integer such that
\begin{equation}
    \mu>\left(\frac{28 n}{rd}+1\right) \frac{1}{\epsilon}~,\label{eq:mu}
\end{equation}
then, as we desire, $\Gamma(\b \sfv, \mu\b 1^t)$ contains a $(d-1)t$-dimensional cube of length $\frac{28 n}{rd}+1$ with edges parallel to the coordinate axes!

Now comes the punchline: unlike $\Delta(\b v,\b m)$, $\Gamma(\b \sfv,\b m)$ has a nice ``containment'' property.
\begin{lemma}\label{lem:contain}
If $\b m'\ge \b m$, then $\Gamma(\b \sfv,\b m') \supseteq \Gamma(\b \sfv,\b m)$.
\end{lemma}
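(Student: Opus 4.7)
The plan is to observe that the constraints cutting out $\Gamma(\b v,\b m)$ split into two groups with very different dependence on $\b m$. The envy-freeness constraints \eqref{EF} are completely independent of $\b m$: they only involve the valuations $\b v$ and the differences $\xi_{k,a}$. So if $\b\xi$ satisfies \eqref{EF}, it continues to satisfy \eqref{EF} regardless of how we vary $\b m$.

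The only place $\b m$ enters is in the positivity constraints \eqref{pos}, and it appears as the additive term $m_a$. I would fix an arbitrary $\b\xi\in\Gamma(\b v,\b m)$ and $a\in[t]$, $i\in[n]$, and rewrite \eqref{pos} with $m_a$ replaced by $m_a'$ as
\[
m_a' - \sum_{k=1}^{i-1} k\xi_{k,a} + \sum_{k=i}^{n-1}(n-k)\xi_{k,a} \;=\; (m_a'-m_a) + \left(m_a - \sum_{k=1}^{i-1} k\xi_{k,a} + \sum_{k=i}^{n-1}(n-k)\xi_{k,a}\right).
\]
The second term is nonnegative because $\b\xi\in\Gamma(\b v,\b m)$, and the first term is nonnegative by hypothesis $m_a'\ge m_a$. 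Hence the inequality continues to hold for $\b m'$. Combining with the previous paragraph, $\b\xi\in\Gamma(\b v,\b m')$.

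There is no real obstacle here — the lemma is a direct consequence of the way the new variables $\xi_{i,a}$ were chosen, which shunted the $\b m$-dependence entirely into the easy-to-relax positivity constraints. This is precisely the payoff of reformulating the LP in terms of $\b\xi$ rather than $\b x$, since $\Delta(\b v,\b m)$ contains an equality constraint \eqref{comp-def} that explicitly forbids such a monotonicity statement.
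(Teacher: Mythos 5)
Your proof is correct and takes the same approach as the paper: note that \eqref{EF} does not depend on $\b m$, and that increasing $m_a$ only makes the positivity constraints \eqref{pos} easier to satisfy. The paper states this more tersely (as a single chained inequality) but the reasoning is identical.
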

\begin{proof}
Say $\b \xi\in \Gamma(\b \sfv,\b m)$.
Then, $\b \xi$ satisfies \eqref{EF} and \eqref{pos} with parameters $\b \sfv$ and $\b m$.
Since $\b m'\ge \b m$, we have 
\begin{align*}
    &m_a' - \sum_{\sfk=1}^{\sfi-1} \left(\sum_{\sfj=1}^{\sfk} n_\sfj\right) \xi_{\sfk,a} + \sum_{\sfk=\sfi}^{d-1} \left(n-\sum_{\sfj=1}^{\sfk} n_\sfj\right) \xi_{\sfk,a} 
    \\
    &\ge m_a - \sum_{\sfk=1}^{\sfi-1} \left(\sum_{\sfj=1}^{\sfk} n_\sfj\right) \xi_{\sfk,a} + \sum_{\sfk=\sfi}^{d-1} \left(n-\sum_{\sfj=1}^{\sfk} n_\sfj\right) \xi_{\sfk,a} \ge 0~,\quad \forall \sfi\in[d]~,~a\in[t]~.
\end{align*}
Therefore, $\b \xi$ satisfies \eqref{EF} and \eqref{pos} with parameters $\b \sfv$ and $\b m'$.
In other words, $\b \xi\in \Gamma(\b \sfv,\b m')$.
\end{proof}

\cref{lem:contain} ensures that $\Gamma(\b \sfv,\b m)$ contains a $(d-1)t$-dimensional cube of length $\frac{28 n}{rd}+1$ with edges parallel to the coordinate axes, whenever $\b m\ge \mu\b 1^t$.
Thus, \cref{thm:EF} is proved.

Let us tie up the only remaining loose end: proving \cref{lem:volume}.
\begin{proof}[Proof of \cref{lem:volume}]
Let us replace \eqref{EF} with ``stronger'' inequalities
\begin{equation*}
    \sum_{a\in[t]} \sfv_\sfi(a) \xi_{\sfk,a} \ge 0~,\quad\text{and}\quad \sum_{a\in[t]} \sfv_\sfj(a) \xi_{\sfk,a} \le 0~, \quad \forall 1\le \sfi\le \sfk<\sfj\le d~.
\end{equation*}
After removing repetitions, these are same as the inequalities
\begin{equation}
    \begin{aligned}
        &\sum_{a\in[t]} \sfv_\sfi(a) \xi_{\sfj,a} \ge 0~,\quad\forall 1\le \sfi\le \sfj< d~,
        \\
        &\sum_{a\in[t]} \sfv_\sfi(a) \xi_{\sfj,a} \le 0~,\quad\forall 1\le \sfj<\sfi\le d~.
    \end{aligned}\tag{\ref*{EF}${}'$} \label{EF-new}
\end{equation}
Let $\Gamma'(\b \sfv,\b m)\subseteq \mathbb R^{(d-1)t}$ denote the feasible set of the modified constraints \eqref{EF-new} and \eqref{pos}.
It is clear that $\Gamma'(\b \sfv,\b m) \subseteq \Gamma(\b \sfv,\b m)$ because \eqref{EF-new} is stronger than \eqref{EF}, i.e., if $\b \xi$ satisfies \eqref{EF-new} then it satisfies \eqref{EF} as well.
So, it suffices to show that the $(d-1)t$-dimensional volume of $\Gamma'(\b \sfv,\b m)$ is nonzero.

We must handle one subtlety in replacing \eqref{EF} with \eqref{EF-new}.
Although the LP problem given by \eqref{EF} and \eqref{pos} is independent of the ordering on the valuations $\sfV_1,\ldots,\sfV_d$, the modified LP problem given by \eqref{EF-new} and \eqref{pos} does depend on this ordering.
We show that there is an ordering (not necessarily unique) on the valuations for which the $(d-1)t$-dimensional volume of $\Gamma'(\b \sfv,\b m)$ is nonzero.

Let us first find a convenient ordering on the valuations.
For each $\sfi\in[d]$, consider the hyperplane in $\mathbb R^t$, passing through the origin $\b 0^t$, given by $H_\sfi := \{\b z\in \mathbb R^t : \sum_{a\in[t]} \sfv_\sfi(a) z_a = 0\}$.
We denote the \emph{half-spaces} of the hyperplane $H_\sfi$ by $H_\sfi^+$ and $H_\sfi^-$ respectively.
Namely, 
\[H_\sfi^+ = \{\b z\in \mathbb R^t : \textstyle \sum_{a\in[t]} \sfv_\sfi(a) z_a \ge 0\}~, \qquad H_\sfi^- = \{\b z\in \mathbb R^t : \textstyle \sum_{a\in[t]} \sfv_\sfi(a) z_a \le 0\}~.\]
Note that $\b \xi$ satisfies \eqref{EF-new} if and only if $(\xi_{\sfj,1},\ldots,\xi_{\sfj,t}) \in \Sigma^{(j)}(\b \sfv) := H_1^+ \cap \cdots \cap H_\sfj^+ \cap H_{\sfj+1}^- \cap \cdots \cap H_d^-\subseteq \mathbb R^t$ for each $\sfj\in[d-1]$.
In other words, $\b \xi$ satisfies \eqref{EF-new} if and only if $\b \xi \in \Sigma(\b \sfv):=\Sigma^{(1)}(\b \sfv) \times \cdots \times \Sigma^{(d-1)}(\b \sfv)\subseteq \mathbb R^{(d-1)t}$.

Since the valuations are assumed to be distinct, the hyperplanes are all distinct.
Indeed, say two hyperplanes $H_\sfi$ and $H_\sfj$ are the same. Then there is a $\lambda>0$ such that $\sfv_\sfi(a)=\lambda \sfv_\sfj(a)$ for all $a\in[t]$. This implies that $\sfV_\sfi$ and $\sfV_\sfj$ are identical valuations contradicting our assumption.

Since the hyperplanes are distinct, there is a directed line $L$ in $\mathbb R^t$ that does not pass through the origin $\b 0^t$, but passes through the \emph{positive} and \emph{negative orthants}, $\mathbb R^t_+$ and $\mathbb R^t_-$, and intersects the $d$ hyperplanes \emph{transversally} in $d$ \emph{distinct} points, $\b p_1,\ldots,\b p_d$ \cite{hyperplane}.\footnote{Such a line always exists when $t\ge 2$ but not when $t=1$.
This is okay because we do not consider $t=1$ anyway.
Recall that by \cref{def:identical-distinct}, there are no distinct valuations when $t=1$.}
Let $\b z_+$ and $\b z_-$ be two points on $L$ such that $\b z_+\in\mathbb R^t_+$ and $\b z_- \in \mathbb R^t_-$.
The direction of $L$ is given by an arrow pointing from $\b z_-$ to $\b z_+$.
Let the valuations be ordered such that, as we traverse the line $L$ along this arrow, we first cross $H_1$ at $\b p_1$, followed by $H_2$ at $\b p_2$, and so on.
Note that this ordering depends on the choice of $L$ but our proof is independent of this choice.
See \cref{fig:hyperplane-order} for an illustration of this ordering when $t=2$ and $d=3$.

\begin{figure}
    \centering
    \includegraphics[scale=0.25]{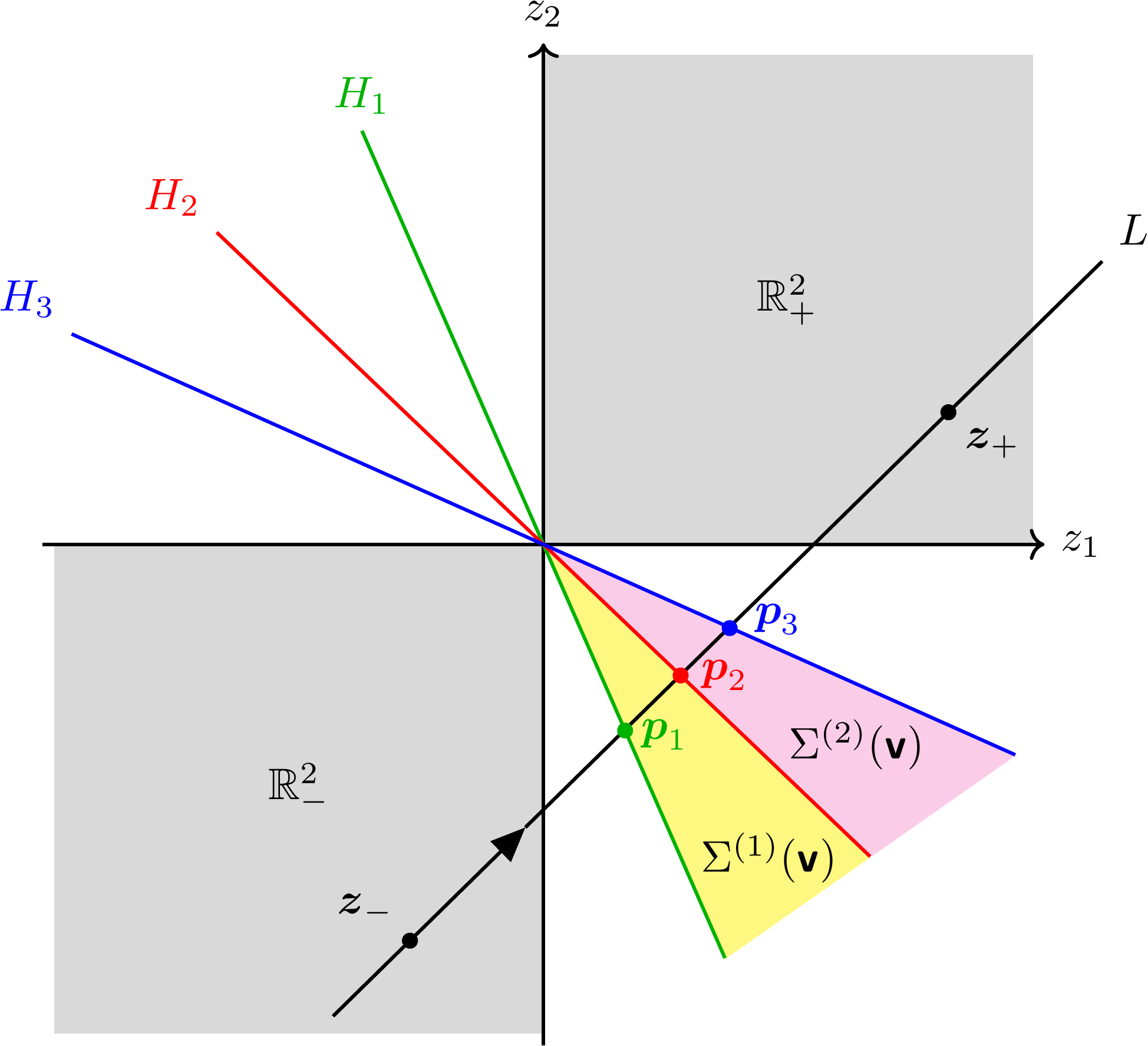}
    \caption{\small Illustration of how to order the agents when $t=2$ and $d=3$ in the proof of \cref{lem:volume}.
    The hyperplanes associated with the valuations $\sfV_1,\sfV_2,\sfV_3$ are the lines $H_1,H_2,H_3$.
    The directed line $L$ does not pass through the origin, but passes through the positive and negative quadrants (orthants in two dimensions; gray regions), and intersects the hyperplanes transversally in points $\b p_1,\b p_2,\b p_3$.
    The valuations are ordered such that as we traverse $L$ along the arrow (from $\b z_-$ to $\b z_+$), we cross the hyperplanes in the order $H_1,H_2,H_3$.
    This order ensures that the regions $\Sigma^{(1)}(\b \sfv) = H_1^+ \cap H_2^- \cap H_3^-$ and $\Sigma^{(2)}(\b \sfv) = H_1^+ \cap H_2^+ \cap H_3^-$, shown in yellow and pink, have nonzero area (2-dimensional volume).}
    \label{fig:hyperplane-order}
\end{figure}

With this ordering on the valuations, for each $\sfj\in[d-1]$, the segment of $L$ between the points $\b p_\sfj$ and $\b p_{\sfj+1}$ is contained in the region $\Sigma^{(\sfj)}(\b \sfv)$.
By the choice of $L$, any \emph{tubular neighborhood}\footnote{Intuitively, a tubular neighborhood of a line $\ell$ in $\mathbb R^t$ can be thought of as ``thickening'' $\ell$ to a $t$-dimensional cylinder, or a \emph{tube}, containing it.} of this segment intersects $\Sigma^{(\sfj)}(\b \sfv)$ in a region of nonzero $t$-dimensional volume.
So, $\Sigma^{(\sfj)}(\b \sfv)$ itself has nonzero $t$-dimensional volume (see \cref{fig:hyperplane-order}), and hence, $\Sigma(\b \sfv)$ has nonzero $(d-1)t$-dimensional volume.

Having taken care of \eqref{EF-new}, let us now turn to \eqref{pos}.
Consider the $(d-1)t$-dimensional cube $[-\rho,\rho]^{(d-1)t}$, where
\begin{equation}
    \rho := \frac{\min_{a\in[t]} m_a}{n(d-1)}~.\label{eq:muho}
\end{equation}
Note that $\rho>0$ because $\b m\in\mathbb R_+^t$ by hypothesis, so $[-\rho,\rho]^{(d-1)t}$ has nonzero $(d-1)t$-dimensional volume.
Then, any $\b \xi\in [-\rho,\rho]^{(d-1)t}$ satisfies \eqref{pos} because for any $\sfi\in[d]$ and $a\in[t]$,
\begin{align*}
    &m_a - \sum_{\sfk=1}^{\sfi-1} \left(\sum_{\sfj=1}^{\sfk} n_\sfj\right) \xi_{\sfk,a} + \sum_{\sfk=\sfi}^{d-1} \left(n-\sum_{\sfj=1}^{\sfk} n_\sfj\right) \xi_{\sfk,a}
    \\
    &\ge m_a - \sum_{\sfk=1}^{\sfi-1} \left(\sum_{\sfj=1}^{\sfk} n_\sfj\right) \rho - \sum_{\sfk=\sfi}^{d-1} \left(n-\sum_{\sfj=1}^{\sfk} n_\sfj\right) \rho
    \\
    &\ge m_a - n(d-1) \rho \ge 0~.
\end{align*}
Since any $\b \xi \in \Sigma(\b \sfv) \cap [-\rho,\rho]^{(d-1)t}$ satisfies both \eqref{EF-new} and \eqref{pos}, it follows that $\Gamma'(\b \sfv,\b m)$ contains $\Sigma(\b \sfv) \cap [-\rho,\rho]^{(d-1)t}$.
Since $\Sigma(\b \sfv)$ is a convex region with the origin $\b 0^{(d-1)t}$ as a vertex (because it is an intersection of half-spaces of hyperplanes passing through the origin), the intersection $\Sigma(\b \sfv) \cap [-\rho,\rho]^{(d-1)t}$ has nonzero $(d-1)t$-dimensional volume.
Therefore, $\Gamma'(\b \sfv,\b m)$ has nonzero $(d-1)t$-dimensional volume as well.
\end{proof}

\subsection{Bounds on $\mu$ in \cref{thm:EF}}\label{sec:bound_mu}

\subsubsection{Upper bound on $\mu$ for $d=2$}\label{sec:upperbound_mu}
In this subsection, we show that for $d=2$, $\mu = O(n^2 \sqrt t/(r\delta))$ using the techniques developed in the previous subsection.
For each $\sfi\in[d]$, the hyperplane $H_\sfi$ passing through the origin is uniquely determined by the normal vector $\b \sfv_\sfi := (\sfv_\sfi(1),\ldots,\sfv_\sfi(t))\in\mathbb R^t_{\ge0}$.
Let $\theta_{\sfi\sfj}$ be the angle between $\b \sfv_\sfi$ and $\b \sfv_\sfj$ given by
\begin{equation*}
    \cos \theta_{\sfi\sfj} = \frac{\b \sfv_\sfi \cdot \b \sfv_\sfj}{\Vert \b \sfv_\sfi\Vert_2 \Vert \b \sfv_\sfj\Vert_2}~.
\end{equation*}
Since the valuations are distinct, we have $0 < \theta_{\sfi\sfj} \le \frac{\pi}{2}$.
For any $0<\delta\le\frac{\pi}{2}$, we say that the valuations $\sfV_\sfi$ and $\sfV_\sfj$ are ``\emph{$\delta$-far from being identical}'' if $\theta_{\sfi\sfj}\ge\delta$.

\begin{theorem}\label{thm:upperboundmu}
If $d=2$ and the valuations $\sfV_1$ and $\sfV_2$ are $\delta$-far from being identical, then $\mu = O(n^2 \sqrt t/(r\delta))$.
\end{theorem}

\begin{proof}
Recall that, by \eqref{eq:mu}, we can choose $\mu = \lceil\left(\frac{14 n}{r}+1\right)\frac{1}{\epsilon}\rceil$, where $\epsilon$ is the length of the largest $t$-dimensional cube, with edges parallel to the coordinate axes, inside $\Gamma(\b \sfv,\b 1^t)$.
A lower bound on $\epsilon$ immediately gives an upper bound on $\mu$.

With $\b m = \b 1^t$ in \eqref{eq:muho}, we have $\rho = 1/n$.
Consider the intersection $\Sigma(\b \sfv) \cap [-\rho,\rho]^t$.
We claim that we can always fit inside this intersection a $t$-dimensional cube, with edges parallel to the coordinate axes, of length
\begin{equation*}
    \epsilon' = \frac{2\rho}{\sqrt{t}}\cdot\frac{\sin(\delta/2)}{1+\sin(\delta/2)} = \frac{1}{\sqrt{t}}\cdot\frac{2\sin(\delta/2)}{1+\sin(\delta/2)} \cdot \frac{1}{n}~.
\end{equation*}
Let us call this cube $C(\b \sfv)$.
Thus, $\epsilon \ge \epsilon'$, and therefore,
\begin{equation*}
    \mu \le \frac{1+\sin(\delta/2)}{2\sin(\delta/2)} \cdot \sqrt{t} \cdot n \cdot \left(\frac{14 n}{r}+1\right) + 1 = O\left( \frac{n^2\sqrt{t}}{r\delta}\right)~.
\end{equation*}

To finish the proof, we have to prove the above claim on existence of cube of length $\epsilon'$.
Let $B(\b 0^t,\rho)$ be the $t$-dimensional ball of radius $\rho$ centered at the origin.
Clearly, $B(\b 0^t,\rho) \subseteq [-\rho,\rho]^t$.
Consider the intersection $W(\b \sfv):=\Sigma(\b \sfv) \cap B(\b 0^t,\rho)$.
Our approach is to show that we can always fit a $t$-dimensional ball of radius $\rho':=\sqrt{t}\epsilon'/2$ inside $W(\b \sfv)$, because we can then fit a $t$-dimensional cube $C(\b \sfv)$ of length $\epsilon'$ inside this ball.

The region $W(\b \sfv)$ is bounded by the hyperplanes $H_1$ and $H_2$ passing through the origin, and the sphere of radius $\rho$ centered at the origin.
Such a region is called a \emph{spherical wedge}.
When $t=2$, it is more commonly called a \emph{sector} or a \emph{pie}.

\begin{figure}
    \centering
    \includegraphics[scale=0.23]{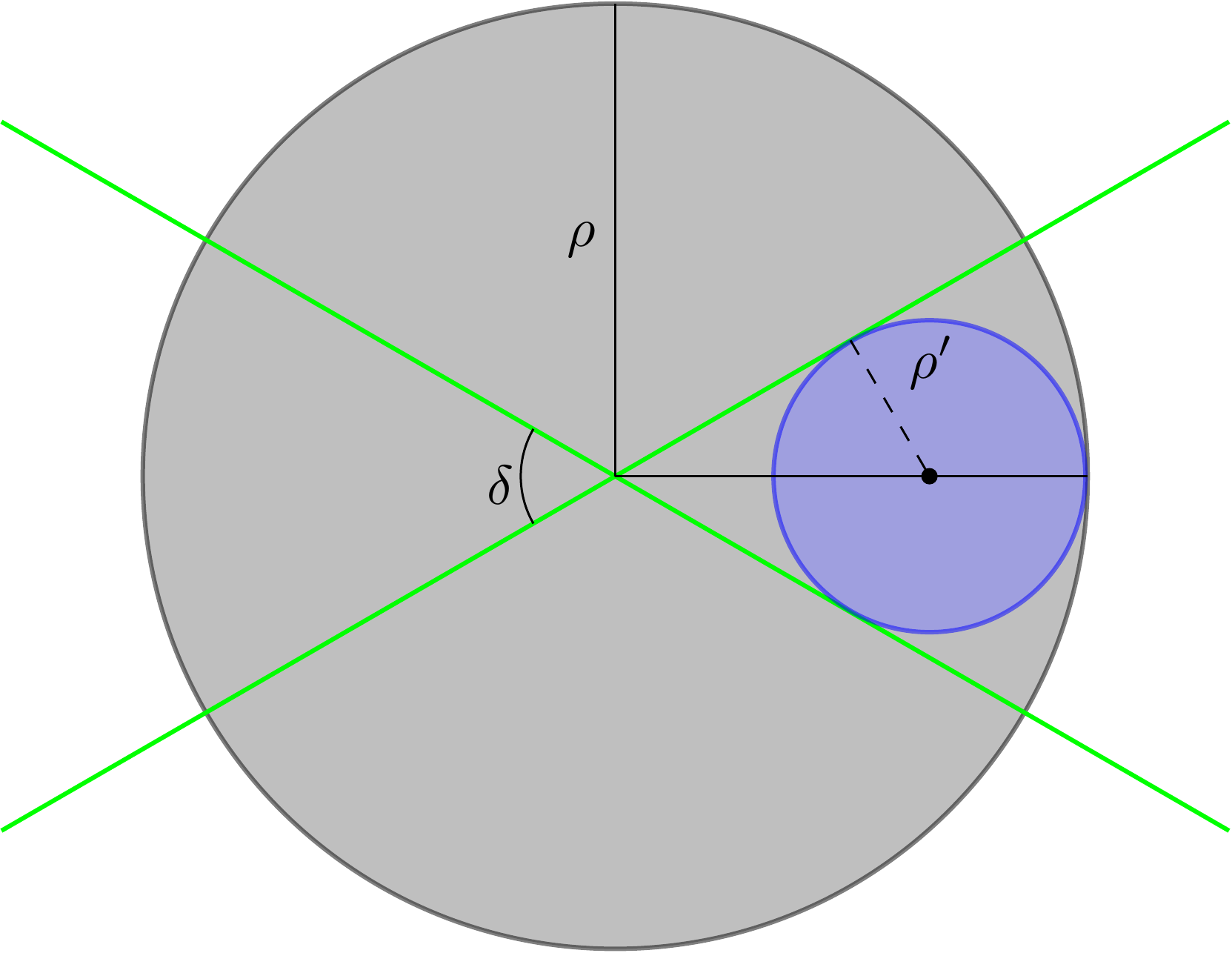}~~~~~~~~~~~~~~~\includegraphics[scale=0.35]{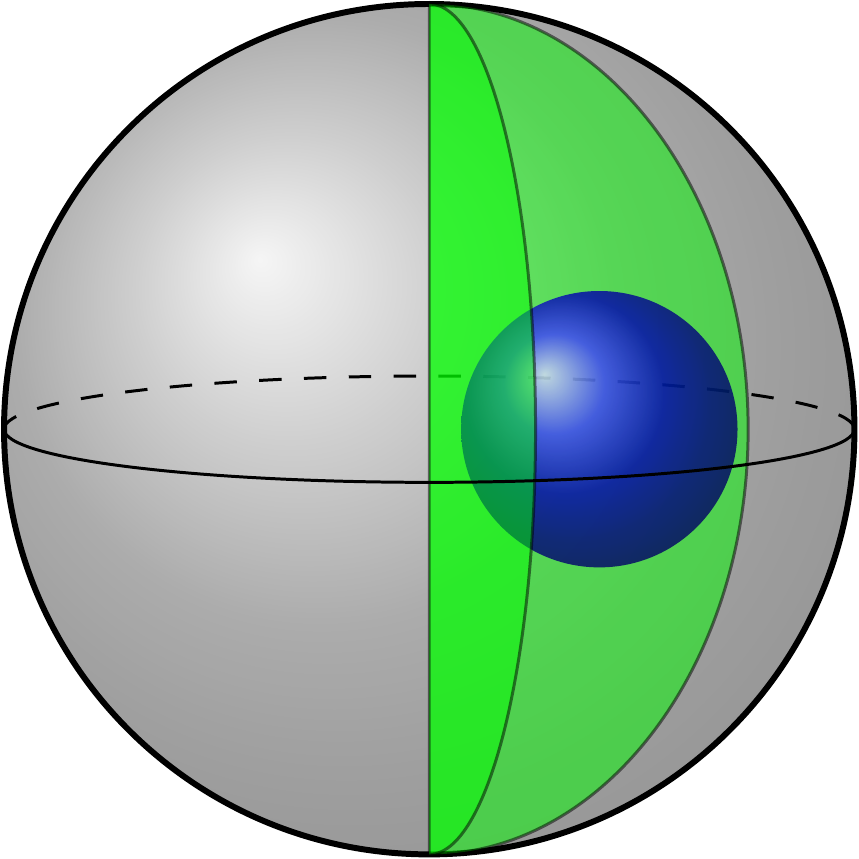}\\
    ~~~~~(a) $t=2$~~~~~~~~~~~~~~~~~~~~~~~~~~~~~~~~~~~~~~~~~~~~~~~~~(b) $t=3$
    \caption{\footnotesize Fitting the largest ball inside a spherical wedge when $t=2$ and $t=3$. The green objects are the two hyperplanes passing through the origin with the included angle $\delta$, the gray objects are the balls of radius $\rho$ centered at the origin, and the blue objects are the largest balls of radius $\rho'$ that can fit inside the spherical wedge.}
    \label{fig:spherical-wedge}
\end{figure}

When $t=2$, a simple geometric construction, as shown in \cref{fig:spherical-wedge}(a), shows that the largest disc (blue disc) that can fit inside a sector of angle $\delta$ has radius
\begin{equation*}
    \rho' = \rho\cdot \frac{\sin(\delta/2)}{1+\sin(\delta/2)} = \frac{\sqrt{t} \epsilon'}{2}~.
\end{equation*}
When $t=3$, as shown in \cref{fig:spherical-wedge}(b), the (blue) ball inside the spherical wedge is largest when the horizontal equatorial planes of the two balls coincide.
So, the problem again reduces to a $2$-dimensional problem in the common equatorial plane.
In general, for any $t$, the problem always reduces to a $2$-dimensional problem, and hence the radius of the largest ball inside the spherical wedge of angle $\delta$ is given by $\rho'$ above.

Since $\rho'$ is a monotonically increasing function of $\delta$ for $0<\delta\le \frac{\pi}{2}$, and since the valuations are $\delta$-far from being identical, a $t$-dimensional ball of radius $\rho'$ fits inside $W(\b \sfv)$.
\end{proof}

\subsubsection{Upper bound on $\mu$ for $t=2$}\label{sec:upperbound_mu2}

\begin{theorem}\label{thm:upperboundmu2}
If $t=2$ and the valuations $\sfV_1,\dots,\sfV_d$ are pairwise $\delta$-far from being identical, then $\mu = O(n^2 /(r\delta))$.
\end{theorem}
Note that, while $\delta$ is an independent parameter, its range depends on $d$ and $t$, i.e., $0<\delta\le \delta^*(d,t)$, where $\delta^*(d,t)$ is the maximum possible \emph{min-angle}\footnote{Here, min-angle is the minimum of the angles between all pairs of vectors.} between $d$ unit vectors in $\mathbb R^t_{\ge0}$.
For example, when $t=2$, the maximum possible min-angle between $d$ unit vectors in a quadrant is $\delta^*(d,2) = \frac{\pi}{2(d-1)}$.
When $t=3$, this is related to the \emph{Tammes' problem} \cite{Tammes:1930}, and the solution is known only for a finite number of $d$'s.

\begin{proof}[Proof of \cref{thm:upperboundmu2}]
This proof is very similar to the proof of \cref{thm:upperboundmu}.
Again, by \eqref{eq:mu}, we can choose $\mu = \lceil\left(\frac{28 n}{rd}+1\right)\frac{1}{\epsilon}\rceil$, where $\epsilon$ is the length of the largest $2$-dimensional cube, i.e., a square, with edges parallel to the coordinate axes, inside $\Gamma(\b \sfv,\b 1^2)$.
A lower bound on $\epsilon$ immediately gives an upper bound on $\mu$.

With $\b m = \b 1^2$ in \eqref{eq:muho}, we have $\rho = 1/n(d-1)$.
Consider the intersection $\Sigma^{(\sfj)}(\b \sfv) \cap [-\rho,\rho]^2$ for $\sfj\in [d-1]$.
We claim that we can always fit inside this intersection a square, with edges parallel to the coordinate axes, of length
\begin{equation*}
    \epsilon' = \frac{2\rho}{\sqrt{2}}\cdot\frac{\sin(\delta/2)}{1+\sin(\delta/2)} = \frac{1}{\sqrt{2}}\cdot\frac{2\sin(\delta/2)}{1+\sin(\delta/2)} \cdot \frac{1}{n(d-1)}~.
\end{equation*}
Let us call this square $C^{(\sfj)}(\b \sfv)$.
Consider the $2(d-1)$-dimensional cube $C(\b \sfv) := C^{(1)}(\b \sfv)\times \cdots \times C^{(d-1)}(\b \sfv)$. Since each $C^{(\sfj)}(\b \sfv) \subseteq \Sigma^{(\sfj)}(\b \sfv) \cap [-\rho,\rho]^2$, we conclude that $C(\b \sfv)\subseteq \Sigma(\b \sfv) \cap [-\rho,\rho]^{2(d-1)} \subseteq \Gamma(\b \sfv, \b 1^2)$.
Thus, $\epsilon \ge \epsilon'$, and therefore,
\begin{equation*}
    \mu \le \frac{1+\sin(\delta/2)}{2\sin(\delta/2)} \cdot \sqrt{2} \cdot n (d-1) \cdot \left(\frac{28 n}{rd}+1\right) + 1 = O\left( \frac{n^2}{r\delta}\right)~.
\end{equation*}

To finish the proof, we have to prove the above claim on existence of square of length $\epsilon'$.
Consider the intersection $W^{(\sfj)}(\b \sfv):=\Sigma^{(\sfj)}(\b \sfv) \cap B(\b 0^2,\rho)$.
We show that we can always fit a disc of radius $\rho':=\epsilon'/\sqrt{2}$ inside $W^{(\sfj)}(\b \sfv)$, because we can then fit a square $C^{(\sfj)}(\b \sfv)$ of length $\epsilon'$ inside this ball.

The region $W^{(\sfj)}(\b \sfv)$ is bounded by the lines $H_\sfj$ and $H_{\sfj+1}$ passing through the origin, and the disc of radius $\rho$ centered at the origin. Like in the proof of \cref{thm:upperboundmu}, the largest disc that can fit inside a sector of angle $\delta$ has radius
\begin{equation*}
    \rho' = \rho\cdot \frac{\sin(\delta/2)}{1+\sin(\delta/2)} = \frac{\epsilon'}{\sqrt{2}}~.
\end{equation*}

Since $\rho'$ is a monotonically increasing function of $\delta$ for $0<\delta\le \frac{\pi}{2}$, and since the valuations are $\delta$-far from being identical, a disc of radius $\rho'$ fits inside $W^{(\sfj)}(\b \sfv)$.
\end{proof}


\begin{remark*}
Note that the above techniques are not helpful in determining the upper bound on $\mu$ when $t>2$ and $d>2$. We demonstrate the issues that arise when $t=d=3$ (similar issues persist for larger values of $t$ and $d$). Consider three item-value functions given by $\sfv_\sfi(1) = 1$, $\sfv_\sfi(2) = 1 + 2\sfi \delta$, for $\sfi\in[3]$, and $\sfv_1(3) = \sfv_2(3) = 0$, $\sfv_3(3) = \delta' $, where $0<\delta',\delta\ll1$. For sufficiently small $\delta$, the valuations $\sfV_1,\sfV_2,\sfV_3$ are $\delta$-far from each other. When $\delta'=0$, we get an effectively $t=2$ problem, for which we know the upper bound on $\mu$ from \cref{sec:upperbound_mu2}. However, when $0<\delta'\ll\delta$, there is a thin region bounded by the three planes $H_1,H_2,H_3$. If the line $L$, chosen to order the valuations in the proof of \cref{lem:volume}, passes through this region, then one of the $\Sigma^{(\sfj)}(\b \sfv)$'s is this thin region. Moreover, the length $\epsilon'$ of the largest cube that one can fit inside this thin region is arbitrarily small, so the upper bound on $\mu$ is arbitrarily large. One can get a better upper bound on $\mu$ by choosing the line $L$ so as to avoid such thin regions. Even in the case of $t=d=3$, where there are three planes in $\mathbb R^3$, finding such an optimal line $L$ is hard. This makes the techniques of previous sections not helpful in finding an upper bound on $\mu$ when $t>2$ and $d>2$.
\end{remark*}

\subsubsection{Lower bound on $\mu$ for $d=2$}\label{sec:lowerbound_mu}

In this subsection, we show that for $d=2$, $\mu = \Omega(n^2/(r\delta))$. In particular, we show the following: 
\begin{theorem}
\label{thm:lowerboundmu}
For every \gnote{added}$t\ge2$, $n\ge 2$, positive proper divisor $r$ of $n$, and sufficiently small $\delta>0$, there exists an instance with
\begin{itemize}
    \item $t$ types of items,
    \item $d=2$ distinct additive valuations $\sfV_1$ and $\sfV_2$ that are $\delta$-far from being identical,
    \item $n_1$ agents with valuation $\sfV_1$ and $n_2$ agents with valuation $\sfV_2$ such that $\gcd(n_1,n_2)=r$ and $n_1+n_2=n$, and
    \item at least $n^2/(48r\delta)$ items of each type
\end{itemize}
such that there is no complete EF allocation.
\end{theorem}

\begin{proof}
Given an $n\ge 2$, we choose $n_1=n-r$ and $n_2=r$ so that $n_1+n_2=n$ and $\gcd(n_1,n_2)=r$. Since $r$ is a positive proper divisor of $n$, we have $r\le n/2$, which means $n_1 \ge n/2$.

We first prove the theorem for $t=2$. We choose the item-value functions to be given by $\sfv_1(1) =\sfv_2(1) = 1$, $\sfv_1(2) = 1+4\delta'$, and $\sfv_2(2) = 1+\delta'$, where $\delta<\delta'<2\delta$. Then, the angle between the vectors $\b \sfv_1 = (1,1+4\delta')$ and $\b \sfv_2 = (1,1+\delta')$ is at least $\delta$, i.e., $\theta_{12} > \delta$ for $0<\delta<\frac18$, so $\sfV_1$ and $\sfV_2$ are $\delta$-far from being identical. Moreover, since $\delta>0$, we can choose $\delta'$ to be an irrational number between $\delta$ and $2\delta$.

For any $\sfi\in[2]$, if $i,j\in N_\sfi$, the EF inequalities \eqref{EF-def} between agents $i$ and $j$ become equations because they have identical valuations. That is,
\begin{equation*}
    \sum_{a\in[2]} \sfv_\sfi(a) (x_{i,a}-x_{j,a}) = 0~,
\end{equation*}
where $x_{i,a}$ is the number of type-$a$ items allocated to agent $i$. Since $\sfv_\sfi(1)/\sfv_\sfi(2)$ is positive and irrational by construction, the only integral solution to the above equation is $x_{i,a}=x_{j,a}$ for all $a\in[2]$. In other words, all agents with identical valuations must receive the same set of items.

Let $\mu_0 = (n^2/r)\lceil 1/(24\delta')\rceil\ge (n^2/r)\lceil 1/(48\delta)\rceil$, and let $M$ be a multiset with
\begin{equation*}
    m_1 = \mu_0 + n - r~,\qquad m_2 = \mu_0~.
\end{equation*}
Note that $m_a \equiv 0 \pmod r$ for $a\in[2]$.
Let us define $\xi_a := \sfx_{1,a} - \sfx_{2,a}$, where $\sfx_{\sfi,a}$ is the number of type-$a$ items allocated to each agent in $N_\sfi$.
Inverting this definition using the completeness constraint \eqref{comp-def}, we find that
\begin{align*}
    \sfx_{1,a} = \frac1n (m_a + n_2\xi_a)~, \qquad \sfx_{2,a} = \frac1n (m_a - n_1\xi_a)~.
\end{align*}
Since $\frac{m_1}{r} \equiv (\frac{n}{r}-1) \mod \frac{n}{r}$ and $\frac{m_2}{r}\equiv 0 \mod \frac{n}{r}$, the integers $\xi_a$ must satisfy
\begin{equation}
    \frac{n_1}{r} \xi_1 \equiv \left(\frac{n}{r}-1\right) \mod \frac{n}{r}~,\qquad \frac{n_1}{r} \xi_2 \equiv 0 \mod \frac{n}{r}~,\label{congruence}
\end{equation}
so that $\sfx_{\sfi,a}$'s are integers.

The remaining EF and positivity constraints \eqref{EF-def} and \eqref{pos-def} give
\begin{align*}
    &\xi_1 + (1+4\delta') \xi_2 \ge 0~, &&-\frac{m_a}{n_2} \le \xi_a \le \frac{m_a}{n_1}~,\quad\forall a\in[2]~.
    \\
    &\xi_1 + (1+\delta') \xi_2 \le 0~,
\end{align*}
From the EF constraints on the left, we have $\xi_2 \ge 0$ and $\xi_1 \le 0$ because
\begin{equation*}
    (1+4\delta')\xi_2 \ge -\xi_1 \ge (1+\delta')\xi_2 \implies 3\delta' \xi_2 \ge 0~,
\end{equation*}
and $\delta'>0$.
Moreover, we have
\begin{align*}
    -\xi_1 &\le (1+4\delta') \xi_2 &-\xi_1 &\ge (1+\delta') \xi_2
    \\
    &\le \xi_2 + \frac{4\delta' m_2}{n_1} &&\ge \xi_2~.
    \\
    &\le \xi_2 + \frac{8\delta' \mu_0}{n}
    \\
    &\le \xi_2 + \frac{n}{r}\left(\frac13 + 8\delta'\right)~,
\end{align*}
So, for $\delta'<1/48$, we have 
\begin{equation*}
    \xi_2\le -\xi_1 < \xi_2 + \frac{n}{2r}~,
\end{equation*}
Since $\xi_1$ and $\xi_2$ are integers, $\xi_1 = -\xi_2 - k$ for an integer $0\le k<n/(2r)$. These solutions do not satisfy \eqref{congruence} because
\begin{equation*}
    \frac{n_1}{r}(\xi_1 + \xi_2) = -\frac{k n_1}{r} \equiv k\mod \frac{n}{r}~,\quad \text{but}\quad k\not\equiv \left(\frac{n}{r}-1\right) \mod \frac{n}{r}~.
\end{equation*}
Thus, there is no integer solution satisfying all the constraints for $m_1=\mu_0+n-r$ and $m_2=\mu_0$. In other words, even when $m_1$ and $m_2$ are both divisible by $r$, and $m_1,m_2\ge\mu_0 \ge n^2/(48r\delta)$, a complete EF allocation of $M$ need not exist.

The proof can be extended to arbitrary $t$ by reproducing the proof for $t=2$ with valuations $\sfV_1$ and $\sfV_2$ such that $\sfv_1(a)=\sfv_2(a)=0$ for all $a> 2$. Intuitively, when nobody values items of type $a$ for $a \ge 3$, only $m_1$ and $m_2$ matter for the existence of complete EF allocations. 
\end{proof}

\subsection{Partial converses of \cref{thm:intro_EF_0modr} and \cref{cor:EF_enoughitems}}\label{sec:result1_converse}
In this subsection, we prove a partial converse of \cref{thm:intro_EF_0modr}. We say an additive valuation $V$, with item-value function $v$, is \emph{irrational} if the ratio $v(a)/v(b)$ is positive and irrational for all distinct $a,b\in[t]$.

\begin{theorem}[Partial converse of \cref{thm:intro_EF_0modr}]\label{thm:partial_converse}
For all pairwise distinct irrational additive valuations $\sfV_1,\ldots,\sfV_d$ and any $M$ with $\b m \not\equiv \b 0^t \pmod r$, there is no complete EF allocation of $M$.
\end{theorem}
\begin{proof}
For any $\sfi\in[d]$, if $i,j\in N_\sfi$, the EF inequalities \eqref{EF-def} between agents $i$ and $j$ become equations because they have identical valuations. That is,
\begin{equation*}
    \sum_{a\in[t]} \sfv_\sfi(a) (x_{i,a}-x_{j,a}) = 0~,
\end{equation*}
where $x_{i,a}$ is the number of type-$a$ items allocated to agent $i$. Since $\sfv_\sfi(a)/\sfv_\sfi(b)$ is positive and irrational for all distinct $a,b\in[t]$, the only integral solution to the above equation is $x_{i,a}=x_{j,a}$ for all $a\in[t]$. In other words, all agents with identical valuations must receive the same set of items.

Let $\sfx_{\sfi,a}$ be the number of type-$a$ items allocated to every agent in $N_\sfi$, i.e., $x_{i,a} = \sfx_{\sfi,a}$ for any $i\in N_\sfi$ and $a\in[t]$. They satisfy the completeness constraint \eqref{comp-def'} given by
\begin{equation*}
    \sum_{\sfi\in[d]} n_\sfi \sfx_{\sfi,a} = m_a~,\quad \forall a\in[t]~.\tag{\ref*{comp-def'}}
\end{equation*}
However, there is no integral solution to the above equation because $m_a \not\equiv 0 \pmod r$ for some $a\in[t]$ (by hypothesis). This means, for any $M$ with $\b m \not\equiv \b 0^t \pmod r$, there is no complete EF allocation of $M$.
\end{proof}

\begin{corollary}[Partial converse of \cref{cor:EF_enoughitems}]\label{cor:partial_converse}
For all pairwise distinct irrational additive valuations, if $r \ne 1$, then for every $\mu\in \mathbb N$, there exists an $M$ with $\b m \ge \mu \b 1^t$ such that there is no complete EF allocation of $M$.
\end{corollary}
\begin{proof}
Since $r\ne1$, choose $M$ such that $\b m \ge \mu \b 1^t$ and $\b m \not\equiv \b 0^t \pmod r$. Then the statement follows from \cref{thm:partial_converse}.
\end{proof}

Note that \cref{thm:partial_converse} is only a partial converse of \cref{thm:intro_EF_0modr}. A full converse of \cref{thm:intro_EF_0modr} would be: \emph{For all pairwise distinct additive valuations, for all $\b k\in (\mathbb Z/r\mathbb Z)^t$, if $\b k\not\equiv \b 0^t \pmod r$, then for all $\mu\in \mathbb N$, there is an $M$ with $\b m \ge \mu \b 1^t$ and $\b m \equiv \b k \pmod r$ such that there is no complete EF allocation of $M$.}
The following example illustrates why this cannot hold.

\begin{example}
For $d=2$, $n_1=n_2=2$, and $t=2$, consider the item-value functions $\sfv_1(a)=1$ and $\sfv_2(a)=a$ for $a\in[2]$. Clearly, the additive valuations $\sfV_1$ and $\sfV_2$ are distinct, and $r=2$. However, we show below that there is a $\mu\in\mathbb N$ such that as long as $m_1,m_2\ge \mu$, there is always a complete EF allocation of $M$.

By \cref{thm:intro_EF_0modr}, there is a $\mu_{0,0}\in\mathbb N$ such that as long as $m_1,m_2\ge \mu_{0,0}$ and $m_1,m_2$ are both even, there is a complete EF allocation of $M$. The problematic cases are when at least one of $m_1,m_2$ is odd. In all the cases below, we assume that $m_1,m_2$ are large enough so that $m_1',m_2'$ are non-negative.
\begin{itemize}
    \item \underline{$m_1$ and $m_2$ are both odd}: Let $m_1' = m_1 - 1$ and $m_2' = m_2 - 3$. Since $m_1'$ and $m_2'$ are both even, by \cref{thm:intro_EF_0modr}, there is a $\mu_{1,1}\in\mathbb N$ such that as long as $m_1',m_2'\ge \mu_{1,1}$, there is a complete EF allocation of $M'$, say $\mathcal X' = \langle (x_{i,1}',x_{i,2}'):i\in[4]\rangle$. Consider the following complete allocation of $M$,
    \begin{align*}
        &x_{1,1} = x_{1,1}'+1~,&& x_{2,1} = x_{2,1}'~,&& x_{3,1} = x_{3,1}'~,&& x_{4,1} = x_{4,1}'~,
        \\
        &x_{1,2} = x_{1,2}'~,&& x_{2,2} = x_{2,2}'+1~,&& x_{3,2} = x_{3,2}'+1~,&& x_{4,2} = x_{4,2}'+1~.
    \end{align*}
    It can be verified that $\mathcal X = \langle (x_{i,1},x_{i,2}):i\in[4]\rangle$ is a complete EF allocation of $M$.
    
    \item \underline{$m_1$ is even and $m_2$ is odd}: Let $m_1' = m_1 - 6$ and $m_2' = m_2 - 1$. Since $m_1'$ and $m_2'$ are both even, by \cref{thm:intro_EF_0modr},  there is a $\mu_{0,1}\in \mathbb N$ such that as long as $m_1',m_2'\ge \mu_{0,1}$, there is a complete EF allocation of $M'$, say $\mathcal X' = \langle (x_{i,1}',x_{i,2}'):i\in[4]\rangle$. Consider the following complete allocation of $M$,
    \begin{align*}
        &x_{1,1} = x_{1,1}'+2~,&& x_{2,1} = x_{2,1}'+2~,&& x_{3,1} = x_{3,1}'+2~,&& x_{4,1} = x_{4,1}'~,
        \\
        &x_{1,2} = x_{1,2}'~,&& x_{2,2} = x_{2,2}'~,&& x_{3,2} = x_{3,2}'~,&& x_{4,2} = x_{4,2}'+1~.
    \end{align*}
    It can be verified that $\mathcal X = \langle (x_{i,1},x_{i,2}):i\in[4]\rangle$ is a complete EF allocation of $M$.
    
    \item \underline{$m_1$ is odd and $m_2$ is even}: Let $m_1' = m_1 - 7$ and $m_2' = m_2 - 4$. Since $m_1'$ and $m_2'$ are both even, by \cref{thm:intro_EF_0modr},  there is a $\mu_{1,0}\in\mathbb N$ such that as long as $m_1',m_2'\ge \mu_{1,0}$, there is a complete EF allocation of $M'$, say $\mathcal X' = \langle (x_{i,1}',x_{i,2}'):i\in[4]\rangle$. Consider the following complete allocation of $M$,
    \begin{align*}
        &x_{1,1} = x_{1,1}'+3~,&& x_{2,1} = x_{2,1}'+2~,&& x_{3,1} = x_{3,1}'+2~,&& x_{4,1} = x_{4,1}'~,
        \\
        &x_{1,2} = x_{1,2}'~,&& x_{2,2} = x_{2,2}'+1~,&& x_{3,2} = x_{3,2}'+1~,&& x_{4,2} = x_{4,2}'+2~.
    \end{align*}
    It can be verified that $\mathcal X = \langle (x_{i,1},x_{i,2}):i\in[4]\rangle$ is a complete EF allocation of $M$.
\end{itemize}
Therefore, there is a $\mu\ge\max\{\mu_{0,0}~,~\mu_{1,1}+3~,~\mu_{0,1}+6~,~\mu_{1,0}+7\}$ such that as long as $m_1,m_2\ge \mu$, there is a complete EF allocation of $M$.
\end{example}

\subsection{Full converse of \cref{cor:EF_enoughitems} for $d=1$}\label{sec:fullConverse}
In this subsection, we prove a full converse of \cref{cor:EF_enoughitems} for the special case of $d=1$.

\begin{theorem}
For every additive valuation $\sfV$, if $n\ge 2$ and every agent has valuation $\sfV$, then for any $\mu\in \mathbb N$, there is an $M$ with $\b m \ge \mu\b 1^t$ such that there is no complete EF allocation of $M$.
\end{theorem}

\begin{proof}
Let $x_{i,a}$ be the number of type-$a$ items allocated to agent $i$. Define a new set of variables $\zeta_{i,a}:=x_{i,a}-x_{i+1,a}$ for $i\in[n-1]$ and $a\in[t]$. Since all agents have identical valuations, the EF inequalities \eqref{EF-def} become equations:
\begin{equation*}
    \sum_{a\in[t]} \sfv(a) \zeta_{i,a} = 0~,\qquad \forall i\in[n-1]~,
\end{equation*}
where $\sfv$ is the item-value function associated with $\sfV$. This is the equation of a $(t-1)$-dimensional hyperplane in $\mathbb R^t$, and the set of all integer points on this hyperplane forms a lattice $\Lambda(\sfv)\subset\mathbb R^t$ isomorphic to $\mathbb Z^{t'}$, where $t' \le t-1$. Let $\b \lambda_1,\ldots,\b \lambda_{t'}$ be a basis of $\Lambda(\sfv)$, where $\b \lambda_{a'}
\in\mathbb Z^t$ for each $a'\in[t']$.
Then, for each $i\in[n-1]$, the vector $\b \zeta_i = (\zeta_{i,1},\ldots,\zeta_{i,t})\in \mathbb Z^t$ can be written as an integer linear combination of $\b \lambda_{a'}$'s, i.e.,
\begin{equation*}
    \b \zeta_i = \sum_{a'\in[t']} \alpha_{i,a'} \b \lambda_{a'}~,
\end{equation*}
for some $\alpha_{i,a'}\in\mathbb Z$.

Now, the completeness constraint \eqref{comp-def} implies 
\begin{equation}\label{congruence2}
    \sum_{k=1}^{n-1} k \b \zeta_k = \b m - n \b x_n \equiv \b m \pmod n~.
\end{equation}
Note that the right hand side of \eqref{congruence2} generates $n^t$ congruence classes modulo $n$ because each $m_a$ is an independent integer. 
On the other hand, we have
\begin{equation}\label{congruence3}
    \sum_{k=1}^{n-1} k \b \zeta_k = \sum_{a'\in[t']} \left( \sum_{k=1}^{n-1} k \alpha_{k,a'} \right) \b \lambda_{a'} = \sum_{a'\in[t']} \beta_{a'} \b \lambda_{a'}~,
\end{equation}
where $\beta_{a'}\in\mathbb Z$. Since there are only $t'<t$ integer coefficients $\beta_{a'}$, the right hand side of \eqref{congruence3} generates only at most $n^{t'} < n^t$ congruence classes modulo $n$. So, we can always choose $M$ such that $\b m \ge \mu \b 1^t$, and \eqref{congruence2} is not satisfied, so that there is no complete EF allocation of $M$.
\end{proof}

\section{EFX for additive valuations when $t=2$}\label{sec:t=2}

\algtext*{EndIf}
\begin{algorithm}[p!]
\caption{EFX for additive valuations when $t=2$}\label{alg:tequals2}
\begin{algorithmic}[1]
\State $\mathcal X  = \langle X_1, X_2, \ldots, X_n \rangle \gets \langle \varnothing, \varnothing, \ldots, \varnothing\rangle$ \Comment{Allocation}
\State $\b u \gets \b m:= (m_1, m_2) $ \Comment{Unallocated items}
\State $N_1 \gets \{ i \in [n] : v_i(1) > v_i(2)\}$ \Comment{Agents who prefer type-$1$ items}
\State $N_2 \gets [n] \setminus N_1$\\

\While{$(u_1 \ge |N_1|) \wedge (u_2 \ge |N_2|)$} \Comment{\textbf{Step 1}}
\For {$i \in [n]$} \Comment{Give each agent their most preferred item}
    \If{$i\in N_1$}
        $X_i \gets X_i \uplus \{g_1 \}$; $u_1 \gets u_1 - 1$
    \Else
        \ $X_i \gets X_i \uplus \{g_2 \}$; $u_2 \gets u_2 - 1$
    \EndIf
\EndFor 
\EndWhile
\If{$(u_1 = 0)\wedge(u_2 = 0)$}
    \Return $\mathcal X$
\EndIf \\

\If{$u_1 < |N_1|$}
    $(a,b) \gets (1,2)$ \Comment{\textbf{Step 2}}
\Else
    \ $(a,b) \gets (2,1)$ \Comment{$a$ is the type of items that run out this round}
\EndIf
\State $R \gets$ Arg-sort($\{v_i(a)/v_i(b):i\in[n]\}$)
\State $N_a^+ \gets \{R[1], R[2],\ldots,R[u_a]\}$ \Comment{$N_a^+\subseteq N_a$ are the $u_a$ agents that most prefer $g_a$ to $g_b$}
\For{$i \in N_a^+$}
    \State $X_i \gets X_i \uplus \{g_a \}$; $u_a \gets u_a - 1$
\EndFor 
\If{$u_b = 0$}
    \Return $\mathcal X$
\EndIf \\

\While{$\forall j \in N_a^+, ~\text{Out-degree}(j, G_{\mathcal X})=0$} \Comment{\textbf{Step 3}}
    \For{$i \in [n]\setminus N_a^+$}
        \State $X_i \gets X_i \uplus \{g_b \}$; $u_b \gets u_b - 1$ \Comment{Give a $g_b$ to every agent not in $N_a^+$}
        \If{$u_b = 0$}
            \Return $\mathcal X$
        \EndIf
    \EndFor 
\EndWhile 
\\

\While{true} \Comment{\textbf{Step 4}}
    \For{$i \in N_a^+$}
        \State $X_i \gets X_i \uplus \{g_b \}$; $u_b \gets u_b - 1$ \Comment{Give a $g_b$ to every agent in $N_a^+$}
        \If{$u_b = 0$}
            \Return $\mathcal X$
        \EndIf
    \EndFor
    \For{$i \in [n]\setminus N_a^+$}
        \State $X_i \gets X_i \uplus \{g_b \}$; $u_b \gets u_b - 1$ \Comment{Give a $g_b$ to every agent not in $N_a^+$}
        \If{$u_b = 0$}
            \Return $\mathcal X$
        \EndIf
    \EndFor
\EndWhile
\end{algorithmic}
\end{algorithm}

As mentioned in the introduction, when $t=1$ and valuations are additive, a complete EFX allocation always exists.
In this section, we give an algorithm (see \cref{alg:tequals2}) that returns a complete EFX allocation when $t=2$ and valuations are additive.
In the algorithm, $N_1$ denotes the set of agents who prefer a type-$1$ item to a type-$2$ item, $N_2:=[n]\setminus N_1$ denotes the set of remaining agents, and $\b u:=(u_1,u_2)$ denotes the multiset of unallocated items at any point.
The function ``Arg-sort'' returns the array of indices (i.e., arguments) that sorts the input set in decreasing order.
The function ``Out-degree'' returns the out-degree of the input vertex in the input directed graph.

We now formally prove that \cref{alg:tequals2} returns a complete EFX allocation.

\begin{theorem}\label{thm:t2}
\cref{alg:tequals2} returns a complete EFX allocation.
\end{theorem}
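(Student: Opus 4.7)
The plan is to prove by induction on the number of items allocated that $\mathcal X$ remains EFX after every assignment, so that the returned allocation---which is complete since the algorithm terminates only when $u_b=0$---is EFX. I will treat each of the four numbered steps separately, exploiting the partition $\{N_1^+,N_1^-,N_2\}$ of $[n]$ produced by Steps~1 and~2; WLOG the items that ran out in Step~1 are of type $1$, i.e., $(a,b)=(1,2)$.

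For Step~1 the invariant is immediate: after every completed round each agent holds the same number $k$ of items of their most-preferred type, so the allocation is envy-free. For Step~2 I first observe that $N_1^+\subseteq N_1$, since agents in $N_1$ have ratio $v_i(1)/v_i(2)>1$ while agents in $N_2$ have ratio $\le 1$, and $u_1<|N_1|$ forces the top $u_1$ ratios to come from $N_1$. Each agent in $N_1^+$ then acquires one extra $g_1$; removing that single extra item from any envied $X_j$, $j\in N_1^+$, returns the allocation to the envy-free end-of-Step-1 state, so EFX holds.

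I expect the main obstacle to be Step~3, which requires a careful case analysis. On the \emph{easy side}, agents in $[n]\setminus N_1^+$ only gain type-$2$ items while the bundles of $N_1^+$-agents are frozen; thus any envy they felt toward $N_1^+$ at the end of Step~2 only weakens, and a direct per-pair check, using $v_i(1)\ge v_i(2)$ for $i\in N_1$ and $v_i(2)\ge v_i(1)$ for $i\in N_2$, shows no new envy arises among themselves. On the \emph{hard side}, the while-loop exits precisely when some $i\in N_1^+$ first envies some $j$. For $j\in N_2$, or for removing a $g_2$ from a $j\in N_1^-$, the required EFX inequality is exactly the no-envy inequality from the end of the previous round, which holds by induction. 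The genuinely subtle sub-case is $j\in N_1^-$ with $X_j=\{g_1^k,g_2^\ell\}$ and we remove a $g_1$: I would derive the needed inequality $\ell v_i(2)\le 2 v_i(1)$ by combining the previous-round bound $(\ell-1)v_i(2)\le v_i(1)$ with $v_i(1)\ge v_i(2)$ (valid since $i\in N_1^+\subseteq N_1$), which gives $\ell v_i(2)\le v_i(1)+v_i(2)\le 2v_i(1)$.

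For Step~4 the argument reuses the Step~3 bookkeeping. Throughout Step~4, the gap between the number of type-$2$ items held by an $[n]\setminus N_1^+$-agent and an $N_1^+$-agent stays in $\{L-1,L\}$, where $L$ is the number of completed Step~3 rounds, so the EFX inequalities that need to be checked for $i\in N_1^+$ are exactly the Step~3 inequalities (possibly with $L$ replaced by $L-1$) that I have already verified. In the reverse direction, the catch-up half-round that gives each $i\in N_1^+$ one $g_2$ only increases $V_i(X_i)$ while leaving every $X_j$, $j\ne i$, unchanged, so any envy from $i\in[n]\setminus N_1^+$ toward $N_1^+$ only weakens, and removing the last-added $g_2$ from any envied $X_j$ always returns $X_j$ to a previously-certified EFX state. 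Composing the four step-level invariants yields the theorem.
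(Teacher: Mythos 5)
Your treatment of Steps 1--3 essentially matches the paper's, but Step 4 contains a genuine gap, and it sits exactly where the paper needs its one nontrivial lemma. You assert that when an agent in $N_1^+$ receives a $g_2$ in Step 4, ``any envy from $[n]\setminus N_1^+$ toward $N_1^+$ only weakens'' --- this is backwards: the bundles of $N_1^+$ grow while those of the other agents are momentarily unchanged, so envy \emph{toward} $N_1^+$ can only increase. Your fallback, that removing the last-added $g_2$ from an envied $X_i$ ``returns $X_i$ to a previously-certified EFX state,'' does not close the gap either: the previous state certifies $V_j(X_j)\ge V_j(X_i^{\mathrm{old}}\setminus\{g\})$ for all $g$, whereas EFX of the new state with the new $g_2$ deleted requires $V_j(X_j)\ge V_j(X_i^{\mathrm{old}})$, i.e.\ that $j$ did not envy $i$ \emph{at all} beforehand. (The analogous move was legitimate in your Step 2 only because the end-of-Step-1 allocation is envy-free, not merely EFX.) Concretely, with $p,q,r$ counting the rounds of Steps 1, 3, 4, take $j\in N_1^-$ with bundle $(p,q+r-1)$ and $i\in N_1^+$ with bundle $(p+1,r)$: deleting $i$'s item that $j$ values least (type $2$) reduces the EFX condition to $q\,v_j(2)\ge v_j(1)$, which is precisely the statement that $j$ does not envy $i$ at the end of Step 3 --- and since $j\in N_1^-$ has $v_j(1)>v_j(2)$, this is not automatic for any $q$.

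The paper supplies exactly this as \cref{lem:noenvy}: when Step 3 terminates because some agent of $N_1^+$ envies someone, no agent of $N_1^-\cup N_2$ envies any agent of $N_1^+$. Its proof is the one place where the definition of $N_1^+$ as the agents with the \emph{largest ratios} $v_i(1)/v_i(2)$ is actually exploited: from the fact that some $i\in N_1^+$ envies $j$ (which one first has to propagate from the agent who triggered the exit of the loop to every $j\in N_1^-$) and $v_i(1)/v_i(2)\ge v_j(1)/v_j(2)$, one gets $V_j(X_i)-V_j(X_j)\le \tfrac{v_j(2)}{v_i(2)}\bigl(V_i(X_i)-V_i(X_j)\bigr)<0$. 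Your proposal uses the ratio ordering only to observe $N_1^+\subseteq N_1$, so this argument appears nowhere in it; without it, your Step 4 claim fails for an $N_1^-$ agent whose ratio is just below the cutoff. The remainder of your outline (Steps 1 and 2, the Step 3 case analysis including the $\ell v_i(2)\le 2v_i(1)$ computation, and the forward direction of Step 4) is sound and follows the paper's route.
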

\begin{proof}
The algorithm is divided into four steps.
We use $\mathcal{X}^{(\ell)}$ to denote the allocation at the end of step $\ell\in[4]$.
For any $i\in [n]$, and $\ell\in [4]$, we use $X_i^{(\ell)}$ and $\b x_i^{(\ell)} = (x_{i,1}^{(\ell)},x_{i,2}^{(\ell)})$ interchangeably to denote the multiset of items allocated to agent $i$ at the end of step $\ell$.
We also use $p$, $q$ and $r$ to denote the number of rounds (i.e., number of executions of while loop) in Steps $1$, $3$ and $4$ respectively.

\underline{\textbf{Step 1}}: In this step, we give each agent their most preferred item in a round-robin way.\footnote{For agents who value both type-$1$ and type-$2$ items equally, we assume, without loss of generality, that their most preferred item is of type $2$.}
Each round in this step consists of allocating one item to every agent.
The step ends when, at the end of a round, there are not enough unallocated items to complete another round, i.e., when $u_1 < |N_1|$, or $u_2 < |N_2|$.
Note that the number of rounds in this step is $p = \min\{ \lfloor m_1/|N_1| \rfloor, \lfloor m_2/|N_2| \rfloor \} \ge 0$.
The allocation at the end of this step is
\begin{equation*}
    \b x_i^{(1)} =
    \begin{cases}
        (p,0)~,\quad &\text{for}\quad i\in N_1~,\\
        (0,p)~,\quad &\text{for}\quad i\in N_2~.
    \end{cases}
\end{equation*}
Since all agents have equal number of items, and every item they have is their most preferred, this allocation is EF, and hence, EFX.
If there are no more unallocated items, then this is a complete EFX allocation, and we are done.
If not, we proceed to Step 2.

\underline{\textbf{Step 2}}: Without loss of generality, let us assume that $u_1 < |N_1|$ (this means $(a,b)=(1,2)$ in the algorithm).
Let us order the agents in the decreasing order of the ratios $v_i(1)/v_i(2)$.\footnote{Recall that we assume that every agent has a positive value for at least one type of item.
Hence, there are no ratios of the (indeterminate) form $\frac{0}{0}$.
It is, however, possible for a ratio to be $\infty$.
Any two $\infty$'s are considered equal.}
The set of first $u_1$ agents in this ordering, who most prefer the type-$1$ items, is denoted as $N_1^+$.
Let us also define $N_1^- := N_1 \setminus N_1^+$.
In this step, the agents in $N_1^+$ are all given one type-$1$ item each.
The allocation at the end of this step is
\begin{equation*}
    \b x_i^{(2)} =
    \begin{cases}
        (p+1,0)~,\quad &\text{for}\quad i\in N_1^+~,\\
        (p,0)~,\quad &\text{for}\quad i\in N_1^-~,\\
        (0,p)~,\quad &\text{for}\quad i\in N_2~.
    \end{cases}
\end{equation*}
Let us show that this allocation is EFX.

\begin{enumerate}
    \item There is no envy between any two agents in $N_1^+$ because they have the same sets.
    \item Let $i\in N_1^+$ and $j\in[n]\setminus N_1^+$.
    Then, $i$ does not envy $j$ because $|X_i^{(2)}|=|X_j^{(2)}|+1$, and every item in $X_i^{(2)}$ is $i$'s most preferred (type-$1$).
    On the other hand, $j$ does not envy $i$ after removing any item from $X_i^{(2)}$ because $|X_j^{(2)}|=|X_i^{(2)}|-1$, and every item in $X_j^{(2)}$ is $j$'s most preferred.
    \item There is no envy between any two agents in $[n]\setminus N_1^+$ because there was no envy between them in $\mathcal X^{(1)}$.
\end{enumerate}
If there are no more unallocated items, then this is a complete EFX allocation, and we are done.
If not, we proceed to Step 3.

\underline{\textbf{Step 3}}: At this point, all the unallocated items are of type $2$.
In this step, we give these items to all agents in $[n]\setminus N_1^+$ in a round-robin way.
The step ends when an agent in $N_1^+$ envies an agent in $[n]\setminus N_1^+$ at the end of a round, or when we run out of unallocated items.
We show that the allocation at the end of this step is EFX in both cases.
Note that the number of rounds in this step is $q\ge1$ because no agent in $N_1^+$ envies any agent in $[n]\setminus N_1^+$ at the end of Step 2, and Step 3 is executed only if there are some unallocated items at the end of Step 2.
\begin{itemize}
    \item \underline{\emph{Case (i)}}: An agent in  $N_1^+$ envies an agent in $[n]\setminus N_1^+$ at the end of the $q$th round.
    The allocation in this case is
    \begin{equation}\label{eq:X3}
        \b x_i^{(3)} =
        \begin{cases}
            (p+1,0)~,\quad &\text{for}\quad i\in N_1^+~,\\
            (p,q)~,\quad &\text{for}\quad i\in N_1^-~,\\
            (0,p+q)~,\quad &\text{for}\quad i\in N_2~.
        \end{cases}
    \end{equation}
    \begin{enumerate}
        \item There is no envy between any two agents in $N_1^+$ because they have the same sets.
        
        \item Let $i\in N_1^+$ and $j\in[n]\setminus N_1^+$.
        Note that $X_i^{(3)} = X_i^{(2)}$, and $X_j^{(3)}\supsetneq X_j^{(2)}$ because $q\ge1$.
        Since $\mathcal{X}^{(2)}$ is an EFX allocation, $j$ does not envy $i$ after removing any item from $X_i^{(3)}$.
        Note that before the $q$th round, $i$ did not envy $j$.
        So we have $V_i(X_i^{(3)})\ge V_i(X_j^{(3)}\setminus \{g_2\})$, where $g_2\in X_j^{(3)}$ because $q\ge1$.
        Since $v_i(1)\ge v_i(2)$, $i$ does not envy $j$ after removing any item from $X_j^{(3)}$.
        
        \item There is no envy between any two agents in $[n]\setminus N_1^+$ because they did not envy each other in $\mathcal X^{(2)}$, and they received the same multiset of items in this step.
    \end{enumerate}
    
    \item \underline{\emph{Case (ii)}}: We run out of unallocated items in the $q$th round.
    The allocation in this case is
    \begin{equation*}
        \b x_i^{(3)} =
        \begin{cases}
            (p+1,0)~,\quad &\text{for}\quad i\in N_1^+~,\\
            (p,q)~\text{or}~(p,q-1)~,\quad &\text{for}\quad i\in N_1^-~,\\
            (0,p+q)~\text{or}~(0,p+q-1)~,\quad &\text{for}\quad i\in N_2~.
        \end{cases}
    \end{equation*}
    \begin{enumerate}
        \item There is no envy between any two agents in $N_1^+$ because they have the same sets.
        
        \item Let $i\in N_1^+$ and $j\in[n]\setminus N_1^+$.
        Note that $X_i^{(3)} = X_i^{(2)}$, and $X_j^{(3)}\supseteq X_j^{(2)}$ because $q\ge1$.
        Since $\mathcal{X}^{(2)}$ is an EFX allocation, $j$ does not envy $i$ after removing any item from $X_i^{(3)}$.
        Note that before the $q$th round, $i$ did not envy $j$.
        So, if $i$ envies $j$ in $\mathcal X^{(3)}$, $j$ must have received a type-$2$ item in the $q$th round.
        Combining these, we have $V_i(X_i^{(3)})\ge V_i(X_j^{(3)}\setminus \{g_2\})$.
        Since $v_i(1)\ge v_i(2)$, $i$ does not envy $j$ after removing any item from $X_j^{(3)}$.
        
        \item Let $i,j\in [n]\setminus N_1^+$.
        If $|X_i^{(3)}| = |X_j^{(3)}|$, then they do not envy each other because they did not envy each other in $\mathcal X^{(2)}$, and they received the same multiset of items in this step.
        If $|X_i^{(3)}| = |X_j^{(3)}| + 1$, then $i$ must have received a type-$2$ item in the $q$th round, while $j$ did not.
        Then, $i$ does not envy $j$ because $i$ did not envy $j$ in $\mathcal X^{(2)}$, and $i$ received more type-$2$ items than $j$ in this step.
        Whereas, $j$ does not envy $i$ after removing any item from $X_i^{(3)}$ because $|X_j^{(3)}| = |X_i^{(3)}| - 1$, and $X_j^{(3)}$ contains at least as many of $j$'s most preferred items as $X_i^{(3)}\setminus\{g\}$ for any $g\in X_i^{(3)}$.
    \end{enumerate}
\end{itemize}
If there are no more unallocated items, then this is a complete EFX allocation in both cases, and we are done.
If not, which can happen only when Step 3 ends in Case (i), we proceed to Step 4.

Before proceeding to Step 4, let us state a useful lemma.
\begin{lemma}\label{lem:noenvy}
In the allocation $\mathcal X^{(3)}$ given by \eqref{eq:X3}, i.e., when Step 3 ends in Case (i), no agent in $[n]\setminus N_1^+ = N_1^- \cup N_2$ envies any agent in $N_1^+$.
\end{lemma}
We prove this lemma later, and proceed to Step 4 now.

\underline{\textbf{Step 4}}: In this step, we give type-$2$ items to all agents in a round-robin way.
Each round starts with giving an item to every agent in $N_1^+$, and only then to the remaining agents.
The step ends when we run out of unallocated items.
When this happens, either some agent in $[n]\setminus N_1^+$ received an item in the last round, or not.
We show that the allocation at the end of this step is EFX in both cases.
Note that number of rounds in this step is $r\ge1$ because Step 4 is executed only if there are some unallocated items at the end of Step 3, which, as we mentioned above, can happen only when Step 3 ends in Case (i).
\begin{itemize}
    \item \underline{\emph{Case (i)}}: Some agent in $[n]\setminus N_1^+$ received an item in the $r$th round.
    This means, every agent in $N_1^+$ must have received an item in the $r$th round.
    The allocation in this case is
    \begin{equation*}
        \b x_i^{(4)} =
        \begin{cases}
            (p+1,r)~,\quad &\text{for}\quad i\in N_1^+~,\\
            (p,q+r)~\text{or}~(p,q+r-1)~,\quad &\text{for}\quad i\in N_1^-~,\\
            (0,p+q+r)~\text{or}~(0,p+q+r-1)~,\quad &\text{for}\quad i\in N_2~.
        \end{cases}
    \end{equation*}
    \begin{enumerate}
        \item There is no envy between any two agents in $N_1^+$ because they have the same sets.
        
        \item Let $i\in N_1^+$ and $j\in[n]\setminus N_1^+$.
        We know that $V_i(X_i^{(3)})\ge V_i(X_j^{(3)}\setminus \{g_2\})$ in \eqref{eq:X3}, i.e,
        \[ V_i(p+1,0)\ge V_i(p,q-1) \ge V_i(0,p+q-1)~, \]
        where the second inequality follows from $v_i(1)\ge v_i(2)$.
        This implies that
        \[V_i(p+1,r)\ge V_i(p,q+r-1) \ge V_i(0,p+q+r-1)~.\]
        Since $v_i(1)\ge v_i(2)$, it follows that $i$ does not envy $j$ after removing any item from $X_j^{(4)}$.
        Now, say $j\in N_2$.
        Since $|X_j^{(4)}|\ge |X_i^{(4)}|-1$, and $X_j^{(4)}$ has only $j$'s most preferred items, $j$ does not envy $i$ after removing any item from $X_i^{(4)}$.
        Next, say $j\in N_1^-$.
        By \cref{lem:noenvy}, we have $V_j(X_j^{(3)})\ge V_j(X_i^{(3)})$ in \eqref{eq:X3}, i.e., $V_j(p,q)\ge V_j(p+1,0)$.
        This implies that $V_j(p,q+r-1)\ge V_j(p+1,r-1)$.
        Since $v_j(1)\ge v_j(2)$, it follows that $j$ does not envy $i$ after removing any item from $X_i^{(4)}$.
        
        \item Let $i,j\in [n]\setminus N_1^+$.
        If $|X_i^{(4)}| = |X_j^{(4)}|$, then they do not envy each other because they did not envy each other in $\mathcal X^{(3)}$ of \eqref{eq:X3}, and they received the same multiset of items in this step.
        If $|X_i^{(4)}| = |X_j^{(4)}| + 1$, then $i$ must have received a type-$2$ item in the $r$th round, while $j$ did not.
        Then, $i$ does not envy $j$ because $i$ did not envy $j$ in $\mathcal X^{(3)}$ of \eqref{eq:X3}, and $i$ received more type-$2$ items than $j$ in this step.
        Whereas, $j$ does not envy $i$ after removing any item from $X_i^{(4)}$ because $|X_j^{(4)}| = |X_i^{(4)}| - 1$, and $X_j^{(4)}$ contains at least as many of $j$'s most preferred items as $X_i^{(4)}\setminus\{g\}$ for any $g\in X_i^{(4)}$.
    \end{enumerate}
    
    \item \underline{\emph{Case (ii)}}: No agent in $[n]\setminus N_1^+$ received an item in the $r$th round.
    The allocation in this case is
    \begin{equation*}
        \b x_i^{(4)} =
        \begin{cases}
            (p+1,r)~\text{or}~(p+1,r-1)~,\quad &\text{for}\quad i\in N_1^+~,\\
            (p,q+r-1)~,\quad &\text{for}\quad i\in N_1^-~,\\
            (0,p+q+r-1)~,\quad &\text{for}\quad i\in N_2~.
        \end{cases}
    \end{equation*}
    \begin{enumerate}
        \item Let $i,j\in N_1^+$.
        If $X_i^{(4)} = X_j^{(4)}$, then they do not envy each other.
        If $X_i^{(4)} = X_j^{(4)} \uplus \{g_2\}$, then $i$ does not envy $j$, while $j$ does not envy $i$ after removing any item from $X_i^{(4)}$ because $v_j(1)> v_j(2)$.
        
        \item Let $i\in N_1^+$ and $j\in[n]\setminus N_1^+$.
        We know that $V_i(X_i^{(3)})\ge V_i(X_j^{(3)}\setminus \{g_2\})$ in \eqref{eq:X3}, i.e,
        \[ V_i(p+1,0)\ge V_i(p,q-1) \ge V_i(0,p+q-1)~, \]
        where the second inequality follows from $v_i(1)\ge v_i(2)$.
        This implies that
        \[V_i(p+1,r-1)\ge V_i(p,q+r-2) \ge V_i(0,p+q+r-2)~.\]
        Since $v_i(1)\ge v_i(2)$, it follows that $i$ does not envy $j$ after removing any item from $X_j^{(4)}$.
        Now, say $j\in N_2$.
        Since $|X_j^{(4)}|\ge |X_i^{(4)}|-1$, and $X_j^{(4)}$ has only $j$'s most preferred items, $j$ does not envy $i$ after removing any item from $X_i^{(4)}$.
        Next, say $j\in N_1^-$.
        By \cref{lem:noenvy}, we have $V_j(X_j^{(3)})\ge V_j(X_i^{(3)})$ in \eqref{eq:X3}, i.e., $V_j(p,q)\ge V_j(p+1,0)$.
        This implies that $V_j(p,q+r-1)\ge V_j(p+1,r-1)$.
        Since $v_j(1)\ge v_j(2)$, it follows that $j$ does not envy $i$ after removing any item from $X_i^{(4)}$.
        
        \item There is no envy between any two agents in $[n]\setminus N_1^+$ because they did not envy each other in $\mathcal X^{(3)}$ of \eqref{eq:X3}, and they received the same multiset of items in this step.
    \end{enumerate}
\end{itemize}
Since there are no more unallocated items, this is a complete EFX allocation in both cases, and we are done.
\end{proof}

Let us now prove \cref{lem:noenvy} to complete the proof of \cref{thm:t2}.
\begin{proof}[Proof of \cref{lem:noenvy}]
Let us first consider any $j\in N_2$, and any $i\in N_1^+$.
In \eqref{eq:X3}, $|X_j^{(3)}|\ge|X_i^{(3)}|$, and every item in $X_j^{(3)}$ is $j$'s most preferred (type-$2$).
Hence, $j$ does not envy $i$, i.e., no agent in $N_2$ envies any agent in $N_1^+$.

Next, we consider any $j\in N_1^-$.
Since we assume that Step 3 ends in Case (i), there is an $i\in N_1^+$ who envies some $k\in [n]\setminus N_1^+$.
Moreover, in \eqref{eq:X3}, $|X_j^{(3)}|=|X_k^{(3)}|$, and $x_{j,1}^{(3)}\ge x_{k,1}^{(3)}$.
It follows that $V_i(X_j^{(3)})\ge V_i(X_k^{(3)})$ because $v_i(1)\ge v_i(2)$.
Hence, $V_i(X_i^{(3)})< V_i(X_j^{(3)})$, i.e., $i$ envies $j$.

Observe that $x_{i,1}^{(3)} > x_{j,1}^{(3)}$ in \eqref{eq:X3}.
Also, by the definition of $N_1^+$, we have $v_i(1)/v_i(2) \ge v_j(1)/v_j(2)$.
(Note that $v_i(1)\ge0$ and $v_j(1)>0$ because $i,j\in N_1$.
Since $x_{i,1}^{(3)} > x_{j,1}^{(3)}$, and $i$ envies $j$, we conclude that $v_i(2)>0$.
It then follows that $v_j(2)\ge v_j(1) \frac{v_i(2)}{v_i(1)} >0$.)
Putting these facts together, for $\mathcal X^{(3)}$ in \eqref{eq:X3}, we have
\begin{align*}
    V_j(X_i^{(3)}) - V_j(X_j^{(3)})
    &= v_j(1) \left(x_{i,1}^{(3)} - x_{j,1}^{(3)}\right) + v_j(2) \left(0 - x_{j,2}^{(3)}\right)
    \\
    &\le \frac{v_j(2)}{v_i(2)}\left[ v_i(1) \left(x_{i,1}^{(3)} - x_{j,1}^{(3)}\right) + v_i(2) \left(0 - x_{j,2}^{(3)}\right) \right]
    \\
    &= \frac{v_j(2)}{v_i(2)} \left[V_i(X_i^{(3)}) - V_i(X_j^{(3)})\right] < 0~.
\end{align*} 
Thus, $j$ does not envy $i$.
Since $X_i^{(3)}=X_{i'}^{(3)}$ for every $i,i'\in N_1^+$ in \eqref{eq:X3}, we conclude that no agent in $N_1^-$ envies any agent in $N_1^+$.
\end{proof}

\cref{alg:tequals2} runs in time polynomial in $n$ and $m$. It takes constant time to assign each item.
The ``Arg-sort'' method takes $\Theta(n \log{n})$ time.
Updating the envy graph after assigning an item takes $O(n)$ time.
The ``Out-degree'' method takes $O(n)$ time.

\section{Alternative proof of EFX for additive valuations when $t=2$}\label{sec:geometric}

As in \cref{sec:t=2}, consider $n$ agents with additive valuations, and $t$ types of items with $m_a$ items of type $a\in[t]$.
In this section, we will provide a geometrical proof of existence of complete EFX allocations for $t=2$ (\cref{thm:intro_t2}), and explain why this proof does not extend immediately to $t\ge 3$.

\subsection{Preliminaries}

Recall that any multiset $X$ with $x_a$ items of type $a\in[t]$ is associated with a nonnegative integer point $\b x := (x_1,\ldots,x_t) = \sum_{a\in[t]}x_a \b e_a\in \mathbb Z_{\ge0}^t$, where $\b e_1,\ldots,\b e_t$ are the standard unit vectors of $\mathbb R^t$.
For any agent $i$, we will use $V_i(X)$ and $V_i(\b x)$ interchangeably.
Similarly, we will denote an allocation as $\mathcal X = \langle X_1,\ldots,X_n\rangle$ or $\mathcal X = \langle \b x_1,\ldots,\b x_n \rangle$.

We will need the concept of \emph{Pareto dominance} of allocations.
We say an allocation $\mathcal Y$ Pareto dominates another allocation $\mathcal X$ if $V_i(Y_i) \ge V_i(X_i)$ for all $i\in[n]$.
If in addition $V_j(Y_j) > V_j(X_j)$ for some $j\in[n]$, then we say $\mathcal Y$ strictly Pareto dominates $\mathcal X$.
Note that Pareto dominance gives a partial order on the set of all allocations.

We will also need a few geometric ideas described below.
Let us extend the additive valuation $V_i$ from $\mathbb Z_{\ge0}^t$ to $\mathbb R^t$ linearly, i.e., define $V_i(\b x):=\sum_{a\in[t]} v_i(a) x_a$ for any $\b x = (x_1,\ldots,x_t)\in \mathbb R^t$.
Consider the hyperplane $L_i:=\{\b x\in \mathbb R^t:V_i(\b x)=V_i(\b x_i)\}$ in $\mathbb R^t$ passing through the nonnegative integer point $\b x_i$, and with nonnegative intercepts on all axes.\footnote{Note that the hyperplanes $L_i$ defined here are different from the hyperplanes $H_\sfi$ defined in \cref{sec:EF}. While the hyperplanes here pass through $\b x_i$ but not through the origin $\b 0^t$, the hyperplanes there pass through the origin.}
So, for each agent $i$, we have the following association: $\b x_i$ corresponds to $X_i$, and $L_i$ corresponds to $v_i$.
Observe that the $x_a$-intercept of $L_i$, denoted as $L_{i,a}$, is inversely proportional to $v_i(a)$.
More precisely, $L_{i,a} = V_i(\b x_i)/v_i(a)$.

We refer to the open half-space $\{\b x\in \mathbb R^t: V_i(\b x)>V_i(\b x_i)\}$ as ``above the hyperplane $L_i$,'' and the closed half-space $\{\b x \in \mathbb R^t: V_i(\b x)\le V_i(\b x_i)\}$ as ``below the hyperplane $L_i$.''
For example, if agent $i$ does (not) envy agent $j$, we say that ``$\b x_j$ is above (below) $L_i$.''
In particular, agent $i$ is not envied by any other agent, i.e., $i$ is a source in the envy graph $G_{\mathcal X}$ if and only if the integer point $\b x_i$ is below $L_j$ for all $j\in[n]$.

We define the following partial order on $\mathbb R^t$: $\b x\le \b y$ if $x_a \le y_a$ for all $a\in [t]$.
The inequality is strict if in addition $x_a < y_a$ for some $a\in[t]$.
For example, if $\b x$ and $\b y$ are integer points associated with sets $X$ and $Y$, then $\b x < \b y$ if and only if $X \subsetneq Y$.
It is useful to note that if $\b x \le \b y$, and $\b y$ is below $L_i$, then $\b x$ is also below $L_i$.
On the other hand, if $\b x > \b y$, and $\b y$ is above $L_i$, then $\b x$ is above $L_i$.

\subsection{Geometrical proof for $t=2$}
We are now ready to prove that complete EFX allocations exist when $t=2$.
In this case, the ambient space is a plane $\mathbb R^2$, and hyperplanes are lines.
So, for every agent $i\in[n]$, a nonnegative integer point $\b x_i=(x_{i,1},x_{i,2})$ corresponds to $X_i$, and a line $L_i$ passing through $\b x_i$, and making nonnegative intercepts on both the axes, corresponds to $v_i$.
An example is shown in \cref{fig:example}.

\begin{figure}
    \centering
    \includegraphics[scale=0.25]{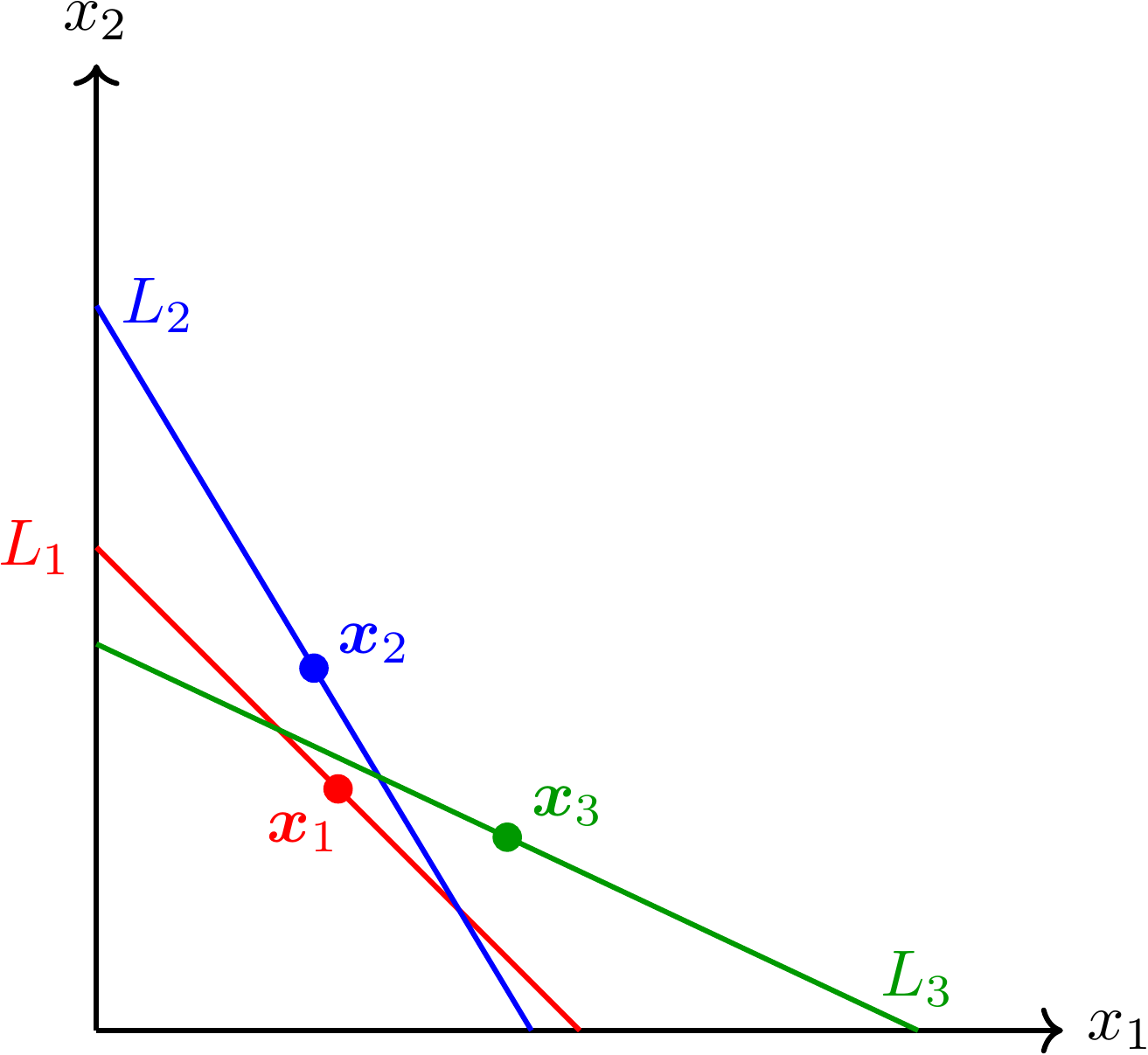}
    \caption{Illustration of points and hyperplanes of agents when $n=3$ and $t=2$.
    The lines (hyperplanes in two dimensions) $L_1,L_2,L_3$ correspond to the valuations, and the points $\b x_1,\b x_2,\b x_3$ correspond to the sets of the three agents.
    Agent $1$ (red) is a source, i.e., no agent envies agent $1$ because $\b x_1$ (red point) is below the rest of the lines $L_2,L_3$ (blue and green lines).}
    \label{fig:example}
\end{figure}

Say $\mathcal X$ is an EFX allocation,\footnote{There is always an EFX allocation because the empty allocation is one.} and $g$ is an unallocated item.
Following \cite{efx_charity}, our strategy is to exhibit a new EFX allocation $\mathcal X'$ that strictly Pareto dominates $\mathcal X$.
Then, when there are no more unallocated items, we are left with a complete EFX allocation.

If giving $g$ to some agent $i$ gives an EFX allocation $\mathcal X'$, with $X_i'=X_i \uplus \{g\}$ and $X_j'=X_j$ for $j\ne i$, then we are done because $\mathcal X'$ strictly Pareto dominates $\mathcal X$.
In what follows, we will assume this is not the case for any $i\in[n]$.

It is well-known that directed cycles in the envy graph of an allocation $\mathcal X$ can be removed while preserving EFX.
This is done by picking a directed cycle and assigning $X_j$ to agent $i$ for every edge $i \to j$ in the directed cycle until there are no more directed cycles.
We refer to this procedure as \emph{de-cycling}.
(See Lemma 2 of \cite{efx_charity} for a formal proof of this.)
In fact, de-cycling yields a strictly Pareto dominant EFX allocation \cite{efx_charity}.
So, without loss of generality, we can assume that the envy graph $G_{\mathcal X}$ is a directed acyclic graph (DAG).
In particular, there is always a source in $G_{\mathcal X}$.

The main result of this section is the following:

\begin{lemma}\label{lem:reachablemea}
Say $g$ is of type $a\in[2]$.
Pick an agent $s\in[n]$ who is a source in $G_{\mathcal X}$, and whose line $L_s$ makes the least intercept along the $x_a$-axis among all the sources, i.e., $L_{s,a} \le L_{s',a}$ for any source $s'$ in $G_{\mathcal X}$.\footnote{If there is a tie, break the tie arbitrarily.}
Then, there is an agent $r\in[n]$ who is 
\begin{enumerate}
    \item a \emph{most envious agent} (MEA) of the multiset $X_s \uplus \{g\}$ \cite{efx_charity}: there is a strict subset $Z_s \subsetneq X_s \uplus \{g\}$ such that agent $r$ envies $Z_s$, but no strict subset of $Z_s$ is envied by any agent.
    
    \item \emph{reachable} from $s$ in $G_{\mathcal X}$ \cite{efx_charity}: there is a directed path $s = i_K \rightarrow i_{K-1} \rightarrow \cdots \rightarrow i_2 \rightarrow i_1 = r$ in $G_{\mathcal X}$,\footnote{We allow $K=1$, i.e., $s$ is trivially reachable from itself.
    However, this does not mean there are self loops in $G_{\mathcal X}$.}
\end{enumerate}
We say $r$ is a \emph{reachable MEA} for $s$ with respect to $g$.
\end{lemma}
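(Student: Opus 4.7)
The plan is to verify the two properties of $r$ in sequence: first the MEA property, then reachability from $s$.

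For the MEA property, the standing hypothesis preceding the lemma---that no single-agent assignment of $g$ preserves EFX---implies in particular that appending $g$ to $X_s$ breaks EFX. Hence some agent $i$ and some item $h\in X_s\uplus\{g\}$ satisfy $V_i((X_s\uplus\{g\})\setminus\{h\})>V_i(X_i)$, so the family of strict subsets of $X_s\uplus\{g\}$ envied by some agent is non-empty. Taking $Z_s$ to be any inclusion-minimal element of this family, no strict subset of $Z_s$ is envied by anyone while some agent $r$ envies $Z_s$, and $r$ is by definition an MEA.

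For reachability I would argue by contradiction. Let $A\subseteq[n]$ be the set of agents reachable from $s$ in $G_{\mathcal X}$, so $s\in A$. Since $G_{\mathcal X}$ is a DAG and $A$ is closed under following out-edges, no agent in $A$ envies any agent in $[n]\setminus A$. Suppose every MEA lies in $[n]\setminus A$. If $\b z_s\le\b x_s$ componentwise, then monotonicity combined with source-ness of $s$ would give $V_i(\b z_s)\le V_i(\b x_s)\le V_i(\b x_i)$ for every $i$, contradicting envy of $Z_s$. Hence $(\b z_s)_a=(\b x_s)_a+1$ and $(\b z_s)_b=(\b x_s)_b-k$ for some $k\ge 1$, where $b$ denotes the type other than $a$. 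Since $s\in A$ does not envy $Z_s$, $V_s(\b z_s)\le V_s(\b x_s)$ gives $v_s(a)/v_s(b)\le k$. For any MEA $r\in[n]\setminus A$, combining $V_r(\b z_s)>V_r(\b x_r)$ with $V_r(\b x_s)\le V_r(\b x_r)$ (source-ness of $s$) yields $v_r(a)/v_r(b)>k$, so $r$'s line $L_r$ is strictly steeper than $L_s$ in the $v(a)/v(b)$ sense.

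The main obstacle, which I expect to be the hardest step, is closing this contradiction via the min-intercept property of $s$. The induced subgraph of $G_{\mathcal X}$ on $[n]\setminus A$ has a source $s^*$; because $A$ sends no out-edges across, $s^*$ is also a source in $G_{\mathcal X}$, so $L_{s,a}\le L_{s^*,a}$ by the choice of $s$. My plan is to track the positions of $\b x_s$, $\b x_{s^*}$, and $\b z_s$ relative to the lines $L_s$, $L_{s^*}$, and $L_r$ in the plane, exploiting that for $t=2$ the single ratio $v(a)/v(b)$ totally orders valuations (so the inequalities $v_s(a)/v_s(b)\le k<v_r(a)/v_r(b)$ place $L_s$ and $L_r$ on opposite sides of a fixed slope). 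If a direct geometric contradiction does not drop out, my fallback is to apply the MEA-existence argument of Stage~1 to $s^*$ in place of $s$: the same hypothesis produces a strict subset $Z^*\subsetneq X_{s^*}\uplus\{g\}$ envied by some $r^*$, together with analogous ratio inequalities for $v_{s^*}(a)/v_{s^*}(b)$, which when combined with $L_{s,a}\le L_{s^*,a}$ should over-constrain the arrangement of intercepts and lines and force the contradiction.
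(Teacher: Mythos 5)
Your first stage is fine: the standing hypothesis that no single-agent assignment of $g$ preserves EFX does yield a non-empty family of envied strict subsets of $X_s\uplus\{g\}$, and an inclusion-minimal member gives an MEA. Your reduction to $\b z_s=\b x_s+\b e_a-k\b e_b$ with $k\ge1$, and the resulting ratio inequalities $v_s(a)/v_s(b)\le k<v_r(a)/v_r(b)$ for every agent $r$ envying $Z_s$ (given that $s$ does not, which you may assume since otherwise $s$ itself is a reachable MEA), are also correct. In fact these inequalities can be pushed one step further than you take them: writing $V_r(\b z_s)/v_r(a)=(\b z_s)_a+(\b z_s)_b\,v_r(b)/v_r(a)$ and comparing with the same expression for $s$, one gets $L_{r,a}<V_r(\b z_s)/v_r(a)\le V_s(\b z_s)/v_s(a)\le L_{s,a}$, so every agent envying $Z_s$ has a strictly smaller $x_a$-intercept than $s$ and hence, by the min-intercept choice of $s$, is not a source. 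You do not carry out this conversion, but it is within reach of what you wrote.

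The genuine gap is in closing the reachability argument. Knowing that the MEA $r$ is not a source gives you exactly one incoming envy edge; it does not put $r$ in $A$. Your contradiction framework (every MEA in $[n]\setminus A$, take a source $s^*$ of the complement) stalls precisely where you say it might: the inequality $L_{s,a}\le L_{s^*,a}$ concerns intercepts of \emph{sources}, while your derived inequalities concern slopes of \emph{enviers of $Z_s$}, and nothing you have written connects the two along the path from $s^*$ to $r$. The missing ingredient is a propagation step: if $j$ is envied by some $i$ but \emph{not} envied by $s$, and $s$ is a source, then $L_{i,a}<L_{s,a}$ (this is the geometric fact behind the paper's Figure on reachability: $\b x_s$ lies below $L_i$, while $\b x_j$ lies above $L_i$ and below $L_s$). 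With it, one traces envy edges backward from $r$: each predecessor has intercept below $L_{s,a}$, hence is not a source, hence has a predecessor; acyclicity makes the chain consist of distinct agents, so it can only terminate by meeting an edge out of $s$, which is exactly reachability. Without this propagation lemma your "over-constrain the arrangement" step is a hope, not an argument, and the fallback of applying the MEA construction to $s^*$ produces an MEA of the wrong multiset ($X_{s^*}\uplus\{g\}$ rather than $X_s\uplus\{g\}$), which does not bear on the lemma's conclusion. For comparison, the paper avoids your contradiction framework entirely: it \emph{selects} a specific MEA (the agent of globally least $x_a$-intercept, or of least first coordinate of intersection with the line $x_a=x_{s,a}+1$), proves directly that this choice satisfies $L_{r,a}<L_{s,a}$, and then runs the backward induction just described.
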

The two notions in the lemma, \emph{most envious agent} and \emph{reachability}, were defined in \cite{efx_charity}.

It follows from \cref{lem:reachablemea} that the allocation $\mathcal X'$, given by $X_{i_{p+1}}'=X_{i_p}$ for $p\in[K-1]$, $X_r'=Z_s$, and $X_i'=X_i$ for the remaining $i$, is an EFX allocation.
Moreover, $\mathcal X'$ strictly Pareto dominates $\mathcal X$ because $V_{i_p}(X_{i_p}')> V_{i_p}(X_{i_p})$ for $p\in[K]$, and $V_i(X_i')=V_i(X_i)$ for the remaining $i$.

Therefore, EFX exists when $t=2$ assuming  \cref{lem:reachablemea}.
Let us prove the lemma now.

\begin{proof}[Proof of \cref{lem:reachablemea}]

Our proof is divided into two steps: in Step 1, we will find a suitable MEA of $X_s \uplus \{g\}$, and then in Step 2, we show that this MEA is reachable from $s$ in $G_{\mathcal X}$.
For concreteness, let us assume that $g$ is of type $2$.

\textbf{\underline{Step 1}}: Consider the nonnegative integer point $\b x_s + \b e_2 = (x_{s,1},x_{s,2}+1)$ associated with the multiset $X_s \uplus \{g\}$.
Any strict subset $Z_s \subsetneq X_s\uplus \{g\}$ is associated with a nonnegative integer point $\b z_s < \b x_s + \b e_2$.
First, we make a few remarks about what $Z_s$ can be.
Removing a type-$2$ item from $X_s \uplus \{g\}$ is equivalent to removing $g$.
But then, $Z_s \subseteq X_s$, and since $s$ is a source, no agent envies $Z_s$.
That is, there cannot be a MEA with $Z_s \subseteq X_s$.
So, we can remove only type-$1$ items from $X_s \uplus \{g\}$.
In other words, $z_{s,2} = x_{s,2} +1$, and $z_{s,1} < x_{s,1}$.
Note that $x_{s,1}>0$ because, if not, then giving $g$ to agent $s$ would give a strictly Pareto dominant EFX allocation because $s$ is a source.

We will now find a suitable MEA of $X_s \uplus \{g\}$.
Let $N_s := \{i\in[n]: L_{i,2} < x_{s,2} + 1\}$, and let $\ell_s := \{\b x \in\mathbb R^2: x_2 = x_{s,2} + 1\}$ be the line parallel to $x_1$-axis that meets $x_2$-axis at $\b q_s = (0,x_{s,2}+1)$.
Recall that $L_{i,2}$ is the $x_2$-intercept of $L_i$.
There are two scenarios, which are illustrated in \cref{fig:mea}:
\begin{enumerate}
    \item \underline{$N_s$ is not empty}: Pick an agent $r\in N_s$ such that $L_{r,2} \le L_{i,2}$ for every $i\in [n]$.\footnote{If there is a tie, break the tie arbitrarily unless when $s\in N_s$ and $s$ is part of the tie, in which case, pick $r=s$.}
    In other words, pick an agent $r\in N_s$ whose line $L_r$ makes the least intercept on the $x_2$-axis.
    Let $\b z_s = \b q_s$.
    Since $r\in N_s$, $\b z_s$ is above $L_r$, and so $r$ envies $Z_s$.
    Moreover, for any $Y_s \subsetneq Z_s$, we have $\b y_s < \b z_s$, which means $\b y_s < \b x_s$ because $y_{s,1} = z_{s,1} = 0 < x_{s,1}$ and $y_{s,2} < z_{s,2} = x_{s,2} + 1$.
    So, $Y_s$ is a strict subset of $X_s$, and hence, it is not envied by any agent because $\mathcal X$ is an EFX allocation.
    Therefore, $r$ is a MEA of $X_s \uplus \{g\}$.
    
    \item \underline{$N_s$ is empty}: Let $\b p_i$ be the point of intersection of $L_i$ and $\ell_s$ in the positive quadrant for each $i\in[n]$.
    Pick an agent $r\in[n]$ such that $p_{r,1} \le p_{i,1}$ for all $i\in [n]$.\footnote{If there is a tie, break the tie arbitrarily unless when $s$ is part of the tie, in which case, pick $r=s$.}
    In other words, pick an agent $r\in[n]$ whose $\b p_r$ has the least $x_1$-coordinate.
    Then, $\b p_r$ is below $L_i$ for all $i\in [n]$.
    Moreover, $p_{r,1} < x_{s,1}-1$ because, if not, then giving $g$ to agent $s$ would give a strictly Pareto dominant EFX allocation because $s$ is a source.
    So, there is an integer point $\b z_s$ such that $\b p_r < \b z_s < \b x_s + \b e_2$, and $\b z_s - \b e_1 \le \b p_r$.
    Then, $r$ envies $Z_s$ because $\b p_r < \b z_s$ implies that $\b z_s$ is above $L_r$.
    Consider any $\b y_s < \b z_s$ associated with a strict subset $Y_s \subsetneq Z_s$. There are two options for $Y_s$:
    \begin{itemize}
        \item If $y_{s,2} \le x_{s,2}$ (i.e., some type-$2$ item is removed from $Z_s$), then $\b y_s < \b x_s$ because $y_{s,1} \le z_{s,1} < x_{s,1}$.
        Since $\mathcal X$ is EFX, no agent envies $Y_s \subsetneq X_s$.
        
        \item If $y_{s,2} = x_{s,2}+1 = z_{s,2}$ (i.e., no type-$2$ item is removed from $Z_s$), then $\b y_s \le \b p_r$ because $\b y_s < \b z_s \implies y_{s,1} \le z_{s,1}-1$.
        Since $\b p_r$ is below $L_i$ for all $i\in [n]$, so is $\b y_s$, and hence no agent envies $Y_s$.
    \end{itemize}
    Therefore, $r$ is a MEA of $X_s \uplus \{g\}$.
    It is useful to note that if $r\ne s$, then $L_{r,2} < L_{s,2}$ because $\b p_s$ and $\b x_s$ are above and below $L_r$ respectively (see the right side of \cref{fig:mea} for a geometrical proof of this).
\end{enumerate}

\begin{figure}
    \centering
    \hfill\includegraphics[scale=0.25]{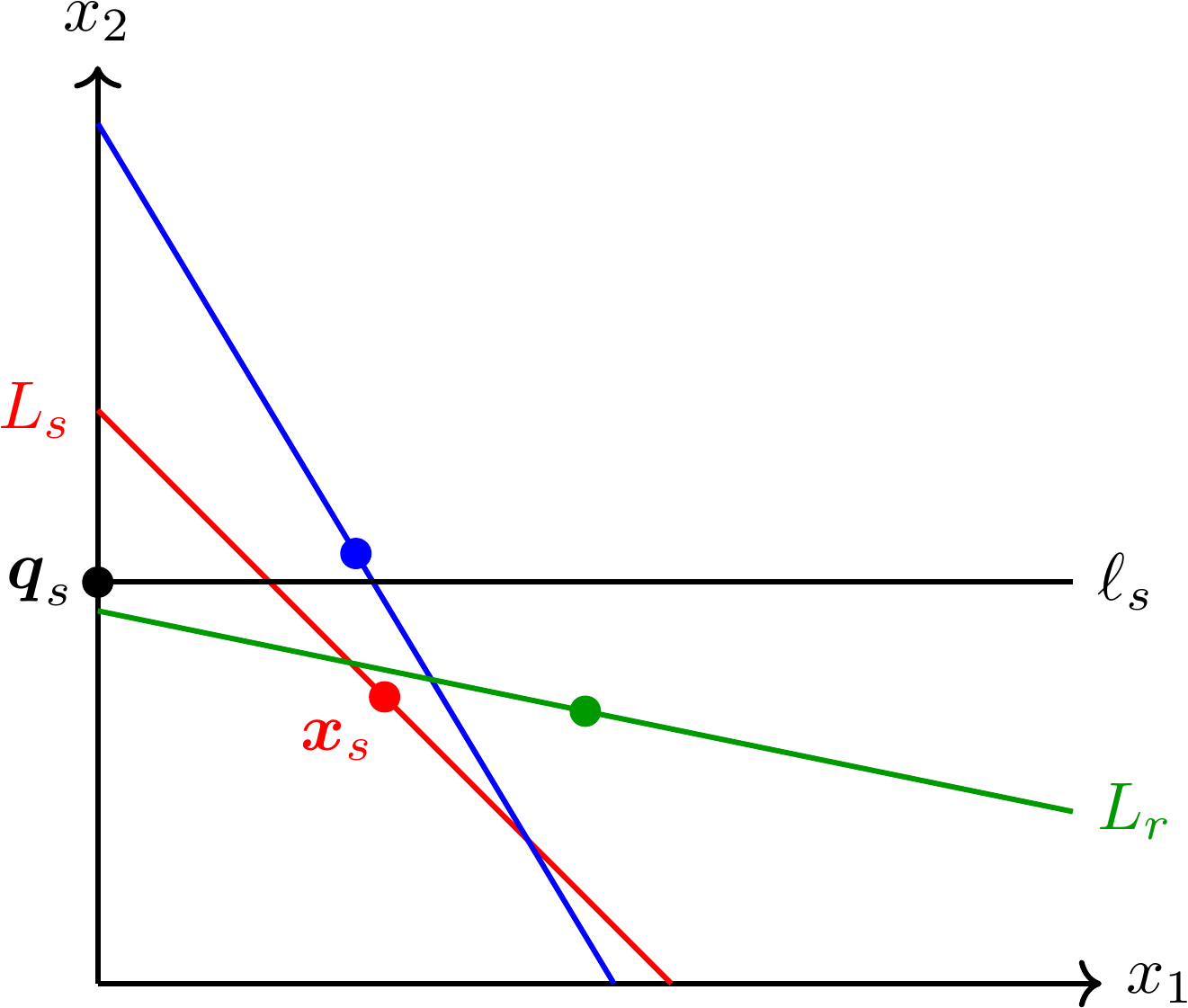}\hfill
    \includegraphics[scale=0.25]{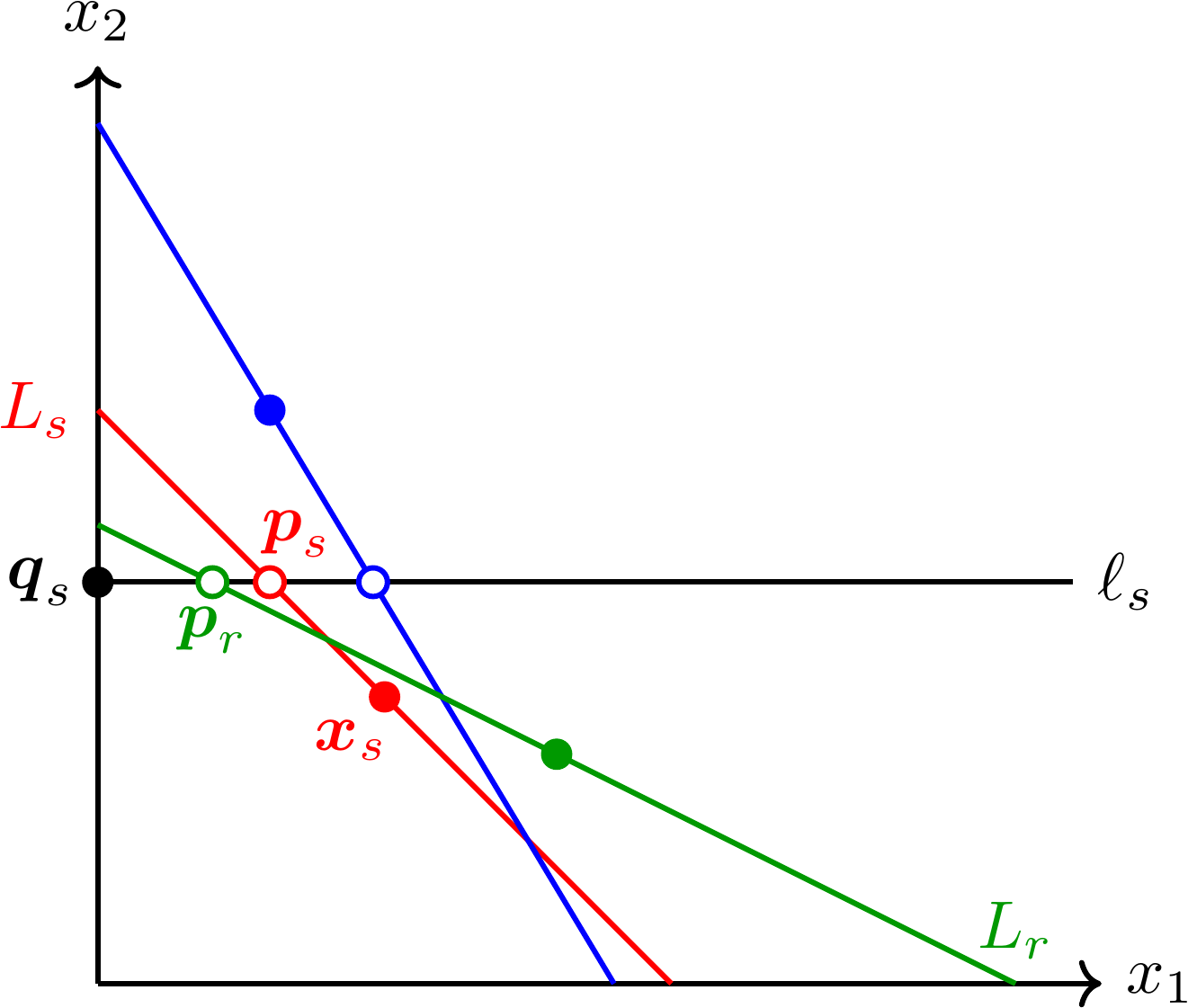}\hfill\hfill\hfill
    \\
    \hfill$N_s$ is not empty\hfill
    $N_s$ is empty\hfill\hfill
    \caption{Two scenarios when finding a suitable MEA $r$ of $X_s \uplus\{g\}$ assuming $g$ is of type $2$.
    Here $n=3$.
    Agent $s$ with point $\b x_s$ (solid red dot) and line $L_s$ (red line) is a source because $\b x_s$ is below all the other lines (green and blue lines).
    The black line is $\ell_s$, which meets the $x_2$-axis at $\b q_s = (0,x_{s,2}+1)$.
    On the left, $N_s$ is not empty because $L_r$ (green line) has a smaller $x_2$-intercept than $x_{s,2}+1$.
    Since the green line has the smallest $x_2$-intercept, we pick the green agent to be agent $r$.
    On the right, $N_s$ is empty because all agents have larger $x_2$-intercept than $x_{s,2}+1$.
    Since the hollow green dot (point of intersection of the green line and $\ell_s$) has the smallest $x_1$-coordinate, we pick the green agent to be agent $r$.
    It is clear from the figure that $L_{r,2} < L_{s,2}$, i.e., the $x_2$-intercept of $L_r$ is less than that of $L_s$ because, when $r\ne s$, $\b p_s$ (hollow red dot) and $\b x_s$ are above and below $L_r$ respectively.}
    \label{fig:mea}
\end{figure}

\textbf{\underline{Step 2}}: We will now prove that $i_1=r$ found in Step 1 is reachable from $s$ in $G_{\mathcal X}$ by induction.
If $i_1=s$, we are done.
If $i_1\ne s$, then recall that $L_{i_1,2} < L_{s,2}$ in both scenarios of Step 1.
This concludes the base case.
Let us move on to the induction step.

Assume that $i_1,i_2,\ldots, i_{K-1}$ are distinct agents in $[n]\setminus\{s\}$ such that $i_{K-1} \rightarrow \cdots\rightarrow i_2 \rightarrow i_1$ is a directed path, and $L_{i_p,2}<L_{s,2}$ for all $p\in[K-1]$.
Note that if there is an edge $s\rightarrow i_p$ for some $p\in[K-1]$, then we are done, so we assume that this is not the case.
Then, we will show that there is an $i_K\in [n]\setminus\{s,i_1,\ldots,i_{K-1}\}$ such that $i_K \rightarrow \cdots\rightarrow i_2 \rightarrow i_1$ is a directed path, and $L_{i_p,2}<L_{s,2}$ for all $p\in[K]$.

Since $L_{i_{K-1},2}<L_{s,2}$, $i_{K-1}$ is not a source because $s$ has the least $x_2$-intercept among all sources by hypothesis.
So, there is an edge $i_K \rightarrow i_{K-1}$ for some agent $i_K$.
By induction hypothesis, there is no edge $s \rightarrow i_{K-1}$.
Also, for all $p\in[K-2]$, there is no edge $i_p \rightarrow i_{K-1}$ because there are no directed cycles in $G_{\mathcal X}$.
Therefore, $i_K \in [n]\setminus\{s,i_1,\ldots,i_{K-1}\}$, and $i_K \rightarrow \cdots\rightarrow i_2 \rightarrow i_1$ is a directed path.

Note that $\b x_s$ is below $L_{i_{K-1}}$ and $L_{i_K}$ because $s$ is a source.
Note also that that $\b x_{i_{K-1}}$ is above $L_{i_K}$ because there is an edge $i_K\rightarrow i_{K-1}$, but it is below $L_s$ because there is no edge $s\rightarrow i_{K-1}$.
From these, we conclude that $L_{i_K,2} < L_{s,2}$ (see \cref{fig:reachable} for a geometrical proof of this).
This concludes the induction step.

\begin{figure}
    \centering
    \includegraphics[scale=0.25]{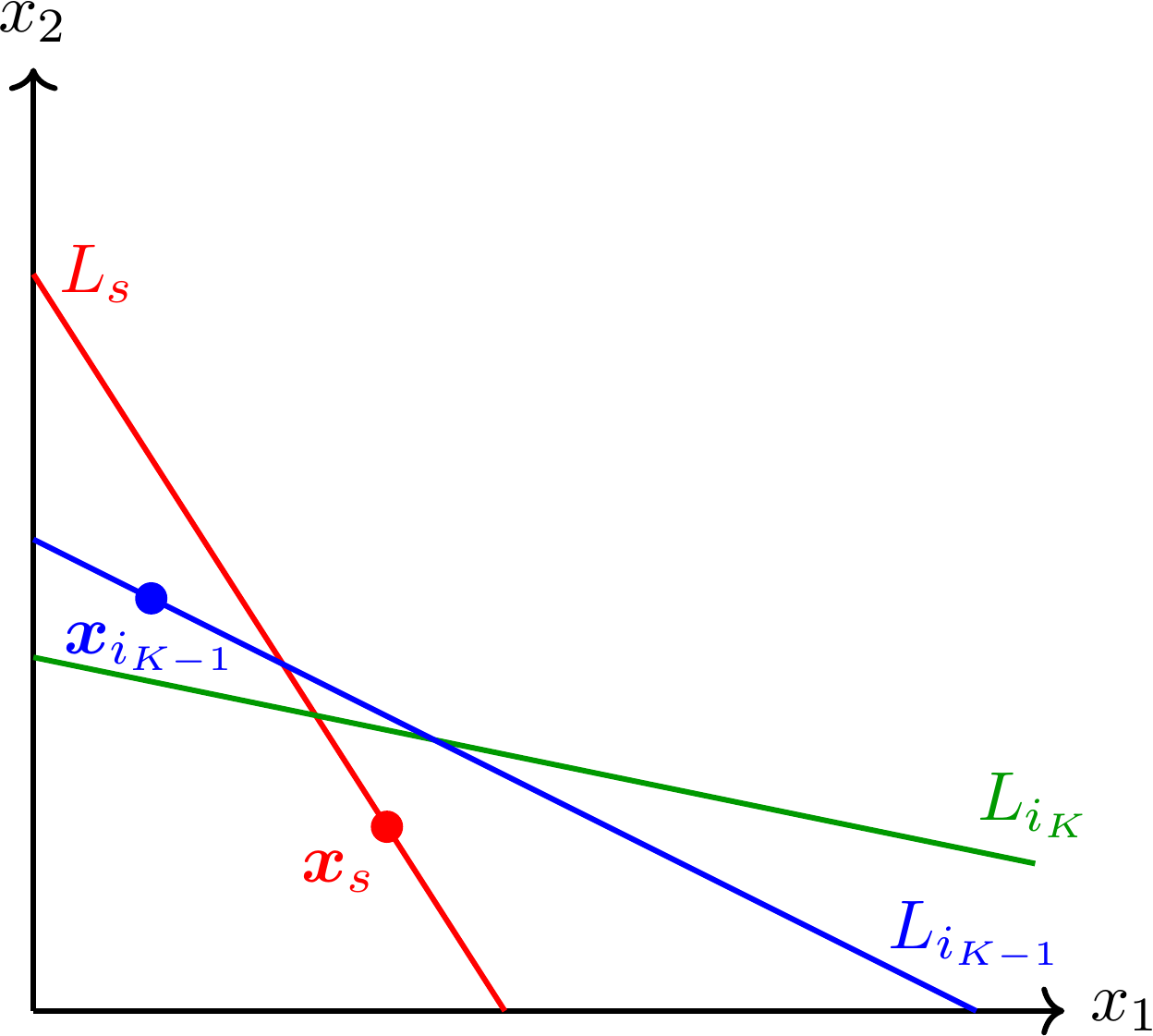}
    \caption{As explained in the main text, $\b x_s$ (red dot) is below $L_{i_{K-1}}$ (blue line) and $L_{i_K}$ (green line), whereas $\b x_{i_{K-1}}$ (blue dot) is above $L_{i_K}$, but below $L_s$ (red line).
    A candidate for $L_{i_K}$ satisfying these restrictions is shown above.
    It follows that $L_{i_K,2} < L_{s,2}$, i.e., the $x_2$-intercept of $L_{i_K}$ is less than that of $L_s$.}
    \label{fig:reachable}
\end{figure}

Since there are only $n$ agents, the above process has to stop, i.e., there is a directed path from $s$ to $r$ in $G_{\mathcal X}$.
Therefore, $r$ is reachable from $s$ in $G_{\mathcal X}$, and the lemma is proved.
\end{proof}

\subsection{Counterexample to \cref{lem:reachablemea} for $t\ge 3$}\label{sec:counterex}

When $t\ge 3$, \cref{lem:reachablemea} is no longer true.
In fact, there need not be a reachable MEA for \emph{any} source.
We prove this by exhibiting a counterexample in the case of $t=3$.
Since $t=3$ is a special case of $t\ge 3$, the same counterexample works for all $t\ge 3$.

Consider $n=3$ agents, $A$, $B$, and $C$, with additive valuations given by the functions $v_A$, $v_B$, and $v_C$ shown in \cref{tbl:noreachablemea}.
Say the numbers of items of each type to be allocated are $m_1 = 9$, $m_2=9$, and $m_3=11$.
It is easy to check that the allocation $\mathcal X = \langle \b x_A,\b x_B,\b x_C\rangle$, where $\b x_A = (3,5,0)$, $\b x_B = (6,3,0)$, and $\b x_C = (0,0,11)$, is EFX.
The envy graph $G_{\mathcal X}$ is shown in \cref{fig:envygraph}.
We see that $A$ and $B$ are sources.

\renewcommand*{\arraystretch}{1.5}
\begin{table}[]
    \centering
    \begin{tabular}{|c|c|c|c|}
        \hline
        $a$ & $v_A(a)$ & $v_B(a)$ & $v_C(a)$  \\
        \hline\hline
        1 & 1 & $1+2\epsilon$ & 0 \\
        \hline
        2 & $2-\epsilon$ & $1+\epsilon$ & $2-\epsilon$ \\
        \hline
        3 & 0 & 1 & 1\\
        \hline
    \end{tabular}
    \caption{Valuations of the three agents $A$, $B$, and $C$ with three types of items.
    Here, $\epsilon = 0.07$.}
    \label{tbl:noreachablemea}
\end{table}

\begin{figure}
    \centering
    \includegraphics[scale=0.25]{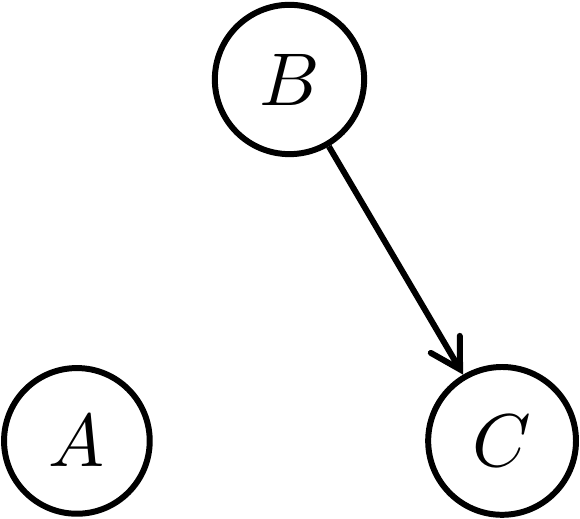}
    \caption{Envy graph of the example discussed in \cref{sec:counterex}.
    Here, $A$ and $B$ are the sources.}
    \label{fig:envygraph}
\end{figure}

We are then left with only one unallocated item $g$ which is of type $2$.
Giving $g$ to any agent does not give an EFX allocation because
\begin{itemize}
    \item if $g$ is given to $A$, then $C$ envies the strict subset of $X_A \uplus \{g\}$ that corresponds to $\b x_A + \b e_2 - \b e_1 = (2,6,0)$,
    
    \item if $g$ is given to $B$, then $A$ envies the strict subset of $X_B \uplus \{g\}$ that corresponds to $\b x_B + \b e_2 - \b e_1 = (5,4,0)$, and
    
    \item if $g$ is given to $C$, then $B$ envies the strict subset of $X_C \uplus \{g\}$ that corresponds to $\b x_C = (0,0,11)$.
\end{itemize}
What about a reachable MEA for a source $s$ in $G_{\mathcal X}$ with respect to $g$? Since there are two sources, there are two possibilities:
\begin{enumerate}
    \item \underline{$s=A$}: $C$ is a MEA of $X_A \uplus \{g\}$ with $\b z_A = \b x_A + \b e_2 - 3\b e_1 = (0,6,0)$.
    However, $B$ is not a MEA of $X_A \uplus \{g\}$ because $B$ does not envy $X_A \uplus \{g\}$.
    And $A$ is not a MEA of $X_A \uplus \{g\}$ because the only $Z_A' \subsetneq X_A \uplus \{g\}$ that $A$ envies corresponds to $\b z_A'=\b x_A + \b e_2 - \b e_1 = (2,6,0)$, but $C$ envies a strict subset of $Z_A'$.
    
    \item \underline{$s=B$}: $A$ is a MEA of $X_B \uplus \{g\}$ with $\b z_B =\b x_B + \b e_2 - \b e_1= (5,4,0)$.
    However, $C$ is not a MEA of $X_B \uplus \{g\}$ because $C$ does not envy $X_B \uplus \{g\}$.
    And $B$ is not a MEA of $X_B \uplus \{g\}$ because $B$ does not envy any strict subset of $X_B \uplus \{g\}$.
\end{enumerate}
Since $A \rightarrow C$ and $B\rightarrow A$ are not edges in $G_{\mathcal X}$, there is no reachable MEA for any source.

Actually, we can say more.
Although $C$ is not a source, we can define a reachable MEA from $C$ with respect to $g$.
If there is one, we can get a strictly Pareto dominant EFX allocation.
In the above example, $B$ is a MEA of $X_C \uplus \{g\}$ with $\b z_C = \b x_C + \b e_2 - 2\b e_3 = (0,1,9)$.
However, $A$ is not a MEA of $X_C \uplus \{g\}$ because $A$ does not envy $X_C \uplus \{g\}$.
And $C$ is not a MEA of $X_C \uplus \{g\}$ because the only $Z_C' \subsetneq X_C \uplus \{g\}$ that $C$ envies corresponds to $\b z_C' = \b x_C + \b e_2 - \b e_1 = (0,1,10)$, but $B$ envies a strict subset of $Z_C'$.
Therefore, there is no reachable MEA even for $C$.

Note that if $g$ had been an item of type $1$ or $3$, then the above counterexample would not have worked.
That is, either adding $g$ to one of the sets would have given an EFX allocation, or there would have been a reachable MEA from one of the sources with respect to $g$.
In either case, we could have obtained a strictly Pareto dominant EFX allocation.

Moreover, the above counterexample does not mean that there is no complete EFX allocation.
Indeed, the complete allocation $\mathcal Y = \langle \b y_A,\b y_B,\b y_C\rangle$, where $\b y_A = (4,3,2)$, $\b y_B = (3,3,4)$, and $\b y_C = (2,3,5)$, is EF, and hence also EFX.

\section*{Acknowledgements}
KM thanks Konstantinos Ameranis and Casey Duckering for informal discussions.
SV acknowledges that the setup of fair allocation of a multiset of items when agents have additive valuations was originally formulated with Madeleine Yang in their final project for the CS238 Optimized Democracy course taught by Ariel Procaccia at Harvard University in Spring 2021.
SV also thanks Ariel Procaccia for helpful discussions.

\printbibliography

\appendix
\section{Proof of \cref{lem:gcd_result}}\label{app:gcd_result}

In this appendix, we prove \cref{lem:gcd_result}.
\begin{proof}[Proof of \cref{lem:gcd_result}]
Without loss of generality, we can assume that $r=1$; otherwise, we can divide $n_\sfi$'s and $n$ by $r$ so that $\gcd(\frac{n_1}{r},\ldots,\frac{n_d}{r})=1$ and $\sum_{\sfi =1}^d \frac{n_\sfi}{r} = \frac{n}{r}$. Since we assumed that $\gcd(n_1,\ldots,n_d)=1$, by Bézout's lemma, there are integers $Y_1,\ldots,Y_d$ such that
\begin{equation}
    \sum_{\sfi\in[d]} n_\sfi Y_\sfi = q~.\label{eq:bezout}
\end{equation}
Without loss of generality, we can assume that $1\le n_1\le n_2\le \cdots \le n_d$ and $-n < q \le 0$. For any $2\le \sfi \le d$, the transformation $(Y_1,Y_\sfi) \rightarrow (Y_1 \pm n_\sfi,Y_\sfi \mp n_1)$ gives another solution to \eqref{eq:bezout}. Using such transformations, we can ensure that $0\le Y_\sfi < n_1 \le \frac{n}{d}$ for $2\le \sfi\le d$. Then,
\begin{equation*}
    Y_1 = \frac{1}{n_1}\left(q - \sum_{\sfi=2}^d n_\sfi Y_\sfi\right) \in ( -2n,0]~.
\end{equation*}
Our goal now is to increase $Y_1$ so that $|Y_1| \le \frac{7 n}{d}$, while ensuring that the remaining $Y_\sfi$ still satisfy $|Y_\sfi|\le\frac{7 n}{d}$.

Note that $n_\sfi \le \frac{7 n}{d}$ for all $\sfi \le \sfi_0:= \lceil \frac{6d}7 \rceil$. Let us perform the transformations $(Y_1,Y_\sfi) \rightarrow (Y_1 + n_\sfi,Y_\sfi - n_1)$ for any $2\le \sfi \le \sfi_0$ repeatedly while maintaining $Y_\sfi > -\frac{7 n}{d}$, and stop the process if $Y_1$ becomes nonnegative at any step. If the process ends with $Y_1\ge0$, then $Y_1 \le n_\sfi$ for some $2\le\sfi\le\sfi_0$, so $Y_1\le \frac{7 n}{d}$, and we are done. We claim that this is the only way the process can end, i.e., $Y_1$ cannot remain negative throughout the process. Indeed, if $Y_1$ remains negative, then for each $2\le\sfi\le\sfi_0$, we must have performed at least $\frac{6n}{d n_1}$ operations on $Y_\sfi$. This means, $Y_1$ must have increased by at least 
\begin{equation*}
    \sum_{\sfi=2}^{\sfi_0} \left(\frac{6n}{d n_1}\right) n_\sfi \ge (\sfi_0-1)\frac{6n}{d} > 2n~.
\end{equation*}
which is a contradiction because $Y_1>-2n$ at the beginning of the process. Therefore, in the end, $|Y_\sfi| \le \frac{7 n}{d}$ for all $\sfi\in[d]$.
\end{proof}
\end{document}